\theoremstyle{plain} \theoremheaderfont{\normalfont\bfseries}
\newtheorem{theorem}{Theorem}[section]
\theoremstyle{plain} \theoremheaderfont{\normalfont\bfseries}
\newtheorem{lemma}[theorem]{Lemma}
\theoremstyle{plain} \theoremheaderfont{\normalfont\bfseries}
\newtheorem{proposition}[theorem]{Proposition}
\theoremstyle{plain} \theoremheaderfont{\normalfont\bfseries}
\theoremstyle{plain} \theoremheaderfont{\normalfont\bfseries}
\newtheorem{definition}[theorem]{Definition}
\theoremstyle{plain} \theoremheaderfont{\normalfont\bfseries}
\theoremstyle{plain} \theoremheaderfont{\normalfont\bfseries}
\newtheorem{example}[theorem]{Example}
\theoremstyle{plain} \theoremheaderfont{\normalfont\bfseries}
\newtheorem{remark}[theorem]{Remarks}
\theoremstyle{plain} \theoremheaderfont{\normalfont\bfseries}
\newtheorem{proof}{Proof}
\theoremstyle{plain} \theoremheaderfont{\normalfont\bfseries}
\theoremstyle{plain} \theoremheaderfont{\normalfont\bfseries}
\newtheorem{algorithm}[theorem]{Algorithm}
\begin{document}

\title{Calculation of aggregate loss distributions}

\author{Pavel V.~Shevchenko\\
\\\footnotesize{CSIRO Mathematics, Informatics and Statistics}\\
\footnotesize{Sydney, Locked Bag 17, North Ryde, NSW, 1670, Australia} \\
\footnotesize{e-mail: Pavel.Shevchenko@csiro.au}}

\date{\footnotesize{Draft, version from 5 June 2010}}

\maketitle

\begin{center}
\footnotesize{This is a preprint of an article published in \\The
Journal of
Operational Risk \textbf{5}(2), pp. 3-40, 2010.\\
www.journalofoperationalrisk.com}
\end{center}

~

\begin{abstract}
\noindent Estimation of the operational risk capital under the Loss
Distribution Approach requires evaluation of aggregate (compound)
loss distributions which is one of the classic problems in risk
theory. Closed-form solutions are not available for the
distributions typically used in operational risk. However with
modern computer processing power, these distributions can be
calculated virtually exactly using numerical methods. This paper
reviews numerical algorithms that can be successfully used to
calculate the aggregate loss distributions. In particular Monte
Carlo, Panjer recursion and Fourier transformation methods are
presented and compared. Also, several closed-form approximations
based on moment matching and asymptotic result for heavy-tailed
distributions are reviewed.

\vspace{1cm} \noindent \textbf{Keywords:} aggregate loss
distribution, compound distribution, Monte Carlo, Panjer recursion,
Fast Fourier Transform, loss distribution approach, operational
risk.

\end{abstract}


\section{Introduction and Model}
\label{sec:introductionords}

Estimation of the operational risk capital under the Loss
Distribution Approach (LDA) requires calculation of the distribution
for the aggregate (compound) loss
\begin{equation}
\label{Paper_CalcCompDistr_CompLoss_eq}
 Z=X_1+\cdots+X_N,
\end{equation}
\noindent where the frequency $N$
is a discrete random variable and $X_1,\ldots,X_N$ are positive
random severities. For a recent review of LDA, see Chernobai et al
(2007)\nocite{ChRaFa07} and Shevchenko (2010)\nocite{Shevchenko09}.
This is one of the classical problems in risk theory. Closed-form
solutions are not available for the distributions typically used in
operational risk. However with modern computer processing power,
these distributions can be calculated virtually \emph{exactly} using
numerical algorithms. The easiest to implement is the Monte Carlo
method. However, because it is typically slow, Panjer recursion
\index{Panjer recursion}and Fourier inversion
techniques\index{Fourier inversion} are widely used. Both have a
long history, but their applications to computing very high
quantiles of the compound distribution functions with high
frequencies and heavy tails are only recent developments and various
pitfalls exist.

This paper presents review and tutorial on the methods used to
calculate the distribution of the aggregate loss
(\ref{Paper_CalcCompDistr_CompLoss_eq}) over a chosen time period.
The following model assumptions and notation are used:

\begin{itemize}
\item Only one risk cell and one time period are considered.
Typically, the calculation of the aggregate loss over a one-year
time period is required in operational risk.
\item $N$ is the
number of events over the time period (\emph{frequency}) modelled as
a discrete random variable with probability mass function $p_k =
\Pr[N = k]$, $k=0,1,\ldots\;$. There is a finite probability of no
loss occurring over the considered time period if $N = 0$ is
allowed, i.e. $\Pr[Z = 0] = \Pr[N = 0]$.
\item $X_i$, $i \ge 1$ are positive
\emph{severities} of the events (loss amounts) modelled as
independent and identically distributed random variables from a
continuous distribution function $F(x)$ with $x\ge 0$ and $F(0)=0$.
The corresponding density function is denoted as $f(x)$.
\item $N$ and $X_i$ are independent for all $i$, i.e. the frequencies and severities are
independent.
\item The distribution and density functions of the aggregate loss $Z$ are denoted as $H(z)$ and
$h(z)$ respectively.
\item All model parameters (parameters of the frequency and severity distributions) are
assumed to be known. In real application, the model parameters are
unknown and estimated using past data. The impact of uncertainty in
parameter estimates on the annual loss distribution can be
significant for low-frequency/high-severity operational risks due to
limited historical data (see Shevchenko
(2008)\nocite{Shevchenko08a}); this topic is beyond the purpose of
this paper.
\end{itemize}

In  general, there are two types of analytic solutions for
calculating the compound distribution $H(z)$. These are based on
convolutions\index{convolution} and method of characteristic
functions\index{characteristic function} described in Section
\ref{AnalyticSolutons_sec}. The moments of the compound loss can be
derived in closed-form via the moments of frequency and severity;
these are presented in Section \ref{AnalyticSolutons_sec} as well.
Section \ref{Paper_CompDistr_VaRandES_sec} gives the analytic
expressions for the Value-at-Risk and expected shortfall risk
measures. Typically, the analytic solutions do not have closed-form
and numerical methods such as Monte Carlo (MC), Panjer recursion,
Fast Fourier Transform (FFT) or direct integration are required;
these are described in Sections
\ref{Paper_CalcCompDistr_MC_section},
\ref{Paper_CalcCompDistr_Panjer_sec}, \ref{Paper_CompDistr_FFT_sec}
and \ref{Paper_CompDistr_DNI_sec} respectively. Comparison of these
methods is discussed in Section
\ref{Paper_CompDistr_method_comparison_sec}. Finally, Section
\ref{Paper_CompDistr_ClosedFormApprox_sec} reviews several
closed-form approximations. The distributions used throughout the
paper are formally defined in Appendix.

\section{Analytic Solutions}
\label{AnalyticSolutons_sec} Analytic calculation of the compound
distribution can be accomplished using methods of convolutions and
characteristic functions. This section presents these methods and
derives the moments of the compound distribution.
\subsection{Solution via Convolutions}
\label{CalcCompDistr_ConvolutionSolution_sec} It is well-known that
the density and distribution functions of the sum of two independent
continuous random variables $Y_1 \sim F_1(\cdot)$ and $Y_2 \sim
F_2(\cdot)$, with the densities $f_1(\cdot)$ and $f_2(\cdot)$
respectively, can be calculated via convolution as
\begin{equation}
f_{Y_1+Y_2}(y)=(f_1\ast f_2)(y)=\int f_2(y-y_1)f_1(y_1)dy_1
\end{equation}
\noindent and
\begin{equation}
F_{Y_1+Y_2}(y)=(F_1\ast F_2)(y)=\int F_2(y-y_1)f_1(y_1)dy_1
\end{equation}
\noindent respectively. Hereafter, notation $f_1\ast f_2$ denotes
convolution of $f_1$ and $f_2$ functions as defined above; notation
$Y\sim F(y)$ means a random variable $Y$ has a distribution function
$F(y)$. Thus the distribution of the aggregate loss
(\ref{Paper_CalcCompDistr_CompLoss_eq}) can be calculated via
convolutions as
\begin{eqnarray}
\label{Paper_CalcCompDistr_CompLossConvolution_eq} H(z)&=&\Pr[Z\leq
z]=\sum_{k=0}^{\infty}\Pr[Z\leq z|N=k]\Pr[N=k]
\nonumber \\
&=& \sum_{k=0}^{\infty}p_k F^{(k)\ast}(z).
\end{eqnarray}

\noindent Here, $F^{(k)\ast}(z)=\Pr[X_1+\cdots+X_k\leq z]$ is the
$k$-th convolution of $F(\cdot)$ calculated recursively as
$$
F^{(k)\ast}(z)=\int_0^{z}F^{(k-1)\ast}(z-x)f(x)dx
$$
\noindent with
$$F^{(0)\ast}(z)=\left\{
\begin{array}{cc}
                         1,\quad & z\geq 0, \\
                         0,\quad & z<0.
                       \end{array}
\right. $$ Note that the integration limits are $0$ and $z$ because
the considered severities  are nonnegative. Though the obtained
formula is analytic, its direct calculation is difficult because, in
general, the convolution powers are not available in closed-form.
Panjer recursion and FFT, discussed in Sections
\ref{Paper_CalcCompDistr_Panjer_sec} and
\ref{Paper_CompDistr_FFT_sec}, are very efficient numerical methods
to calculate these convolutions.

\subsection{Solution via Characteristic Functions}\label{CalcCompDistr_CFsolution_sec}
The method of characteristic functions \index{characteristic
function} for computing probability distributions is a powerful tool
in mathematical finance; it is explained in many textbooks on
probability theory. In particular, it is used for calculating
aggregate loss distributions in the insurance, operational risk and
credit risk. Typically, compound distributions cannot be found in
closed-form but can be conveniently expressed through the inverse
transform of the characteristic functions. The characteristic
function of the severity density $f(x)$ is formally defined as
\begin{equation}
\label{Paper_CalcCompDistr_sevCF_eq} \varphi(t)=\int\limits_{
-\infty }^\infty {f(x)e^{itx}dx},
\end{equation}
\noindent where $i = \sqrt { - 1} $ is a unit imaginary number.
Also, the \emph{probability generating function}\index{probability
generating function} of a frequency distribution with probability
mass function $p_k = \Pr[N = k]$ is
\begin{equation}
\label{Paper_CalcCompDistr_pgf_eq} \psi (s) = \sum\limits_{k =
0}^\infty {s^kp_k}.
\end{equation}
\noindent Then, the characteristic function of the compound loss $Z$
in model (\ref{Paper_CalcCompDistr_CompLoss_eq}), denoted by $\chi
(t)$, can be expressed through the probability generating function
of the frequency distribution and characteristic function of the
severity distribution as
\begin{equation}
\label{Paper_CalcCompDistr_CompCF_eq} \chi (t) = \sum\limits_{k =
0}^\infty {\left(\varphi (t)\right)^kp_k} = \psi (\varphi (t)).
\end{equation}
\noindent For example:
\begin{itemize}
\item If frequency $N$ is distributed from
$Poisson(\lambda)$, then
\begin{equation}
\label{CompDistr_PoissonCase_CF_eq} \chi (t) = \sum\limits_{k =
0}^\infty {\left(\varphi (t)\right)^k\frac{e^{ - \lambda }\lambda
^k}{k!}} = \exp (\lambda \varphi (t) - \lambda );
\end{equation}
\item If $N$ is from negative binomial distribution
$NegBin(m,p)$, then
\begin{eqnarray}\chi (t) &=& \sum\limits_{k = 0}^\infty \left(\varphi (t)\right)^k\left(
{{\begin{array}{c}
 {k + m - 1} \\
 k \\
\end{array} }} \right)(1 - p)^kp^m \nonumber\\
&=&  \left( {\frac{p}{1 - (1 - p)\varphi (t)}} \right)^m.
\end{eqnarray}
\end{itemize}

\noindent Given characteristic function, the density of the
aggregate loss $Z$ can be calculated via the inverse Fourier
transform as
\begin{equation}
h(z) = \frac{1}{2\pi }\int\limits_{-\infty}^\infty
{\chi(t)\exp(-itz) dt} ,\quad z \ge 0.
\end{equation}
\noindent In the case of nonnegative severities, the density and
distribution functions of the compound loss can be calculated using
the following lemma (for a proof, see e.g. Luo and Shevchenko (2009,
Appendix A)\nocite{LuSh09}).

\begin{lemma}
\label{CompDistrViaCF_lemma} For a nonnegative random variable $Z$
with a characteristic function $\chi (t)$, the density $h(z)$ and
distribution $H(z)$ functions, $z \ge 0$, are
\begin{equation}
\label{Paper_CalcCompDistr_DirectIntergationDensity_eq} h(z) =
\frac{2}{\pi }\int\limits_0^\infty {\mathrm{Re}[\chi (t)]\cos
(tz)dt} ,\quad z \ge 0;
\end{equation}
\begin{equation}
\label{Paper_CalcCompDistr_DirectIntergation1_eq} H(z) =
\frac{2}{\pi }\int\limits_0^\infty {\mathrm{Re}[\chi (t)]\frac{\sin
(tz)}{t}dt} ,\quad z \ge 0.
\end{equation}
\end{lemma}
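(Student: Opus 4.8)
The plan is to start from the standard Fourier inversion formula for the density of a nonnegative random variable and exploit the special structure that arises when $Z \ge 0$. Recall that the characteristic function $\chi(t) = \int_0^\infty h(z) e^{itz}\,dz$ satisfies the Hermitian symmetry $\chi(-t) = \overline{\chi(t)}$, since $h$ is real-valued. Consequently $\mathrm{Re}[\chi(t)]$ is an even function of $t$ and $\mathrm{Im}[\chi(t)]$ is odd. I would begin with the inversion integral
\begin{equation*}
h(z) = \frac{1}{2\pi}\int_{-\infty}^\infty \chi(t)\,e^{-itz}\,dt
\end{equation*}
and split the exponential via Euler's formula $e^{-itz} = \cos(tz) - i\sin(tz)$. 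Writing $\chi(t) = \mathrm{Re}[\chi(t)] + i\,\mathrm{Im}[\chi(t)]$ and multiplying out, the integrand becomes a sum of four terms whose parities in $t$ can be read off directly.

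The key step is the parity bookkeeping. Of the four product terms, the two that are \emph{odd} in $t$ (namely $\mathrm{Re}[\chi]\sin(tz)$ and $\mathrm{Im}[\chi]\cos(tz)$, each paired with the appropriate factor of $i$) integrate to zero over the symmetric interval $(-\infty,\infty)$. The two \emph{even} terms, $\mathrm{Re}[\chi(t)]\cos(tz)$ and $\mathrm{Im}[\chi(t)]\sin(tz)$, each double when the integral over $(-\infty,\infty)$ is folded onto $(0,\infty)$. This already yields
\begin{equation*}
h(z) = \frac{1}{\pi}\int_0^\infty \bigl(\mathrm{Re}[\chi(t)]\cos(tz) + \mathrm{Im}[\chi(t)]\sin(tz)\bigr)\,dt.
\end{equation*}
To collapse this to the stated single-term form, I would invoke nonnegativity of $Z$: because $h(z)=0$ for $z<0$, the same formula evaluated at $-z$ must vanish. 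Replacing $z$ by $-z$ flips the sign of the $\sin$ term but not the $\cos$ term, so adding the expressions for $h(z)$ and $h(-z)=0$ cancels the imaginary contribution and shows the cosine term alone carries twice the value. This gives equation (\ref{Paper_CalcCompDistr_DirectIntergationDensity_eq}).

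For the distribution function (\ref{Paper_CalcCompDistr_DirectIntergation1_eq}), I would integrate the density formula from $0$ to $z$, interchange the order of integration (the main technical point requiring justification), and use $\int_0^z \cos(tu)\,du = \sin(tz)/t$. The principal obstacle throughout is the \emph{convergence and Fubini justification}: the inversion integral for $h$ need not be absolutely convergent, so the splitting into separate $\cos$ and $\sin$ integrals and the swap of integration order must be handled as improper (symmetric-limit) integrals or under an integrability assumption on $\chi$. I expect to either assume $\chi \in L^1$ for the clean density statement, or to argue via the distribution function (which is better behaved because of the extra factor $1/t$ damping high frequencies) and recover the density by differentiation. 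Since the excerpt defers to Luo and Shevchenko (2009, Appendix A) for the rigorous treatment, I would present the parity argument as the conceptual core and flag the interchange of limits as the step needing the analytic care supplied there.
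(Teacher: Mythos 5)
Your proposal is correct, and since the paper itself offers no proof of this lemma---it simply defers to Luo and Shevchenko (2009, Appendix A)---there is no in-paper argument to diverge from; your parity argument (Hermitian symmetry of $\chi$, even/odd bookkeeping, then using $h(-z)=0$ for $z>0$ to trade the $\mathrm{Im}[\chi(t)]\sin(tz)$ integral for the $\mathrm{Re}[\chi(t)]\cos(tz)$ integral, followed by integration and Fubini for $H$) is precisely the standard derivation given in that reference. You also correctly flag the only genuine technical points, namely the $L^1$/improper-integral interpretation of the inversion formula and the interchange of integration order.
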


Changing variable $x = t\times z$, the formula
(\ref{Paper_CalcCompDistr_DirectIntergation1_eq}) can be rewritten
as
$$H(z) = \frac{2}{\pi }\int\limits_0^\infty {\mathrm{Re}[\chi (x /
z)]\frac{\sin (x)}{x}dx}, $$ \noindent which is often a useful
representation to study limiting properties. In particular,  in the
limit $z\to 0$, it gives
$$H(z \to 0) = \frac{2}{\pi }\mathrm{Re}[\chi (\infty )]\int\limits_0^\infty
{\frac{\sin (x)}{x}dx = \mathrm{Re}[\chi (\infty )]} .
$$
\noindent This leads to a correct limit $H(0) = \Pr[N = 0]$, because
the severity characteristic function $\varphi (\infty ) \to 0$. For
example, $H(0) = \exp(- \lambda)$ in the case of $N\sim
Poisson(\lambda )$, and $H(0) = p^m$ for $N\sim NegBin(m,p)$.

FFT and direct integration methods to calculate the above Fourier
transforms are discussed in details in Sections
\ref{Paper_CompDistr_FFT_sec} and \ref{Paper_CompDistr_DNI_sec}
respectively.

\subsection{Compound Distribution Moments}
\label{Paper_CompDistr_Moments_sec}\index{compound
distribution!moments} In general, the compound distribution cannot
be found in closed-form. However, its moments can be expressed
through the moments of the frequency and severity. It is convenient
to calculate the moments via characteristic function. In particular,
one can calculate the moments\index{moments} as
\begin{equation}
\label{MomentsViaCF} \mathrm{E}[Z^k]=(-i)^{k}\left.\frac{d^k
\chi(t)}{dt^k}\right|_{t=0},\quad k=1,2,\ldots\;.
\end{equation}
\noindent Similarly, the central moments\index{moments!central
moments} can be found as
\begin{eqnarray}
\mu_k&=&\mathrm{E}[(Z-\mathrm{E}[Z])^k]\nonumber\\\
&=&(-i)^{k}\left.\frac{d^k
\chi(t)\exp(-it\mathrm{E}[Z])}{dt^k}\right|_{t=0},\quad k=1,2,\ldots
\;.
\end{eqnarray}
 \noindent Here, for compound distribution, $\chi(t)$
is given by (\ref{Paper_CalcCompDistr_CompCF_eq}). Then, one can
derive the explicit expressions for all moments of compound
distribution via the moments of frequency and severity noting that
$\varphi(0)=1$ and using relations
\begin{eqnarray}
\label{FreqSevMoments_toDeriveCompMoments_eq} \left.\frac{d^k\psi
(s)}{ds^k}\right|_{s=1}&=&\mathrm{E}[N(N-1)\cdots(N-k+1)],\\
\label{FreqSevMoments_toDeriveCompMoments_eq2}
(-i)^{k}\left.\frac{d^k\varphi
(t)}{dt^k}\right|_{t=0}&=&\mathrm{E}[X^k_1],
\end{eqnarray}
\noindent that follow from the definitions of the probability
generating and characteristic functions
(\ref{Paper_CalcCompDistr_pgf_eq}) and
(\ref{Paper_CalcCompDistr_sevCF_eq}) respectively, though the
expression is lengthy for high moments. Sometimes, it is easier  to
work with the so-called
cumulants\index{moments!cumulants}\index{cumulants} (or
semi-invariants)
\begin{equation}
\label{CumulantsViaCF} \kappa_k=(-i)^{k}\left.\frac{d^k
\ln\chi(t)}{dt^k}\right|_{t=0},
\end{equation}
\noindent which are closely related to the moments. The moments can
be calculated via the cumulants and vice versa. In application, only
the first four moments are most often used with the following
relations:
\begin{equation}
\mu_2=\kappa_2\equiv\mathrm{Var}[Z];\quad \mu_3=\kappa_3;\quad
\mu_4=\kappa_4+3\kappa_2^2.
\end{equation}
Also, popular distribution characteristics are
$\mathrm{skewness}={\mu_3}/{(\mu_2)^{3/2}}$ and
$\mathrm{kurtosis}=-3+{\mu_4}/{(\mu_2)^2}$.

The above formulas relating characteristic function and moments can
be found in many textbooks on risk theory such as McNeil et al
(2005, Section 10.2.2)\nocite{McFrEm05}. The explicit expressions
for the first four moments are given by the following proposition.

\begin{proposition}[Moments of compound distribution]\label{CompDistrFourMoments_proposition} The first
four moments of the compound random variable $Z=X_1+\cdots+X_N$,
where $X_1,\ldots,X_N$ are independent and identically distributed,
and independent of $N$, are given by
\begin{eqnarray*}
\mathrm{E}[Z]&=&\mathrm{E}[N]\mathrm{E}[X_1],  \\
\mathrm{Var}[Z]&=&\mathrm{E}[N] \mathrm{Var}[X_1] + \mathrm{Var}[N]
(\mathrm{E}[X_1])^2, \\
\mathrm{E}[(Z-\mathrm{E}[Z])^3]&=&\mathrm{E}[N]\mathrm{E}[(X_1-\mathrm{E}[X_1])^3]+3\mathrm{Var}[N]\mathrm{Var}[X_1]\mathrm{E}[X_1]
\nonumber\\
&&+\mathrm{E}[(N-\mathrm{E}[N])^3](\mathrm{E}[X_1])^3,\\
\mathrm{E}[(Z-\mathrm{E}[Z])^4]&=&\mathrm{E}[N]\mathrm{E}[(X_1-\mathrm{E}[X_1])^4]+
4\mathrm{Var}[N]\mathrm{E}[(X_1-\mathrm{E}[X_1])^3]\mathrm{E}[X_1]\nonumber\\
&&+3(\mathrm{Var}[N]+\mathrm{E}[N](\mathrm{E}[N]-1))(\mathrm{Var}[X_1])^2\nonumber\\
&&+6(\mathrm{E}[(N-\mathrm{E}[N])^3]+\mathrm{E}[N]\mathrm{Var}[N])(\mathrm{E}[X_1])^2\mathrm{Var}[X_1]\nonumber\\
&&+\mathrm{E}[(N-\mathrm{E}[N])^4](\mathrm{E}[X_1])^4.
\end{eqnarray*}
Here, it is assumed that the required moments of severity and
frequency exist.
\end{proposition}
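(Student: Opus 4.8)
The plan is to compute the first four moments directly from the cumulant structure of the compound characteristic function $\chi(t) = \psi(\varphi(t))$, exploiting the fact that the logarithm of a composition is cleaner to differentiate repeatedly than the raw function. First I would establish the moment-to-cumulant dictionary: from \eqref{CumulantsViaCF} and the standard relations recorded just before the proposition, namely $\mu_2 = \kappa_2$, $\mu_3 = \kappa_3$, and $\mu_4 = \kappa_4 + 3\kappa_2^2$, together with $\mathrm{E}[Z] = \kappa_1$, it suffices to produce $\kappa_1, \kappa_2, \kappa_3, \kappa_4$ for the compound distribution and then translate back to the central moments claimed in the statement.

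The core computation is to differentiate $\ln\chi(t) = \ln\psi(\varphi(t))$ four times at $t=0$ via the chain rule, using $\varphi(0)=1$ and the identities \eqref{FreqSevMoments_toDeriveCompMoments_eq} and \eqref{FreqSevMoments_toDeriveCompMoments_eq2} that convert derivatives of $\psi$ at $s=1$ into factorial moments of $N$ and derivatives of $\varphi$ at $t=0$ into raw moments $\mathrm{E}[X_1^k]$. Introducing the severity cumulants $c_j = (-i)^j \,d^j(\ln\varphi)/dt^j|_{t=0}$ and the frequency cumulants would let me invoke the classical compound-cumulant formula $\kappa_r = \sum \kappa_r^{(N)}$-type expansion, but for a self-contained argument I would instead just expand the Faà di Bruno chain-rule terms by hand up to order four. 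The intermediate quantities are $\psi'(1)=\mathrm{E}[N]$, $\psi''(1)=\mathrm{E}[N(N-1)]$, and the analogous third and fourth factorial derivatives; these combine with the severity moments through the composition. I would organize the bookkeeping by first getting the cumulants in terms of raw severity moments and factorial frequency moments, then re-express everything in the centered form (using $\mathrm{Var}[N]$, $\mathrm{E}[(N-\mathrm{E}[N])^3]$, $\mathrm{E}[(X_1-\mathrm{E}[X_1])^k]$, etc.) to match the proposition exactly.

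The hard part will be purely the algebraic reorganization at the fourth order: the chain rule produces a proliferation of mixed terms (partitions of four derivatives distributed between the outer $\psi$ and inner $\varphi$), and converting the resulting expression from raw moments and factorial frequency moments into the centered moments $\mathrm{Var}[N]$, $\mathrm{E}[(N-\mathrm{E}[N])^3]$, $\mathrm{E}[(N-\mathrm{E}[N])^4]$ and the centered severity moments requires careful cancellation. The coefficients $3$, $4$, $6$, and the combination $\mathrm{Var}[N]+\mathrm{E}[N](\mathrm{E}[N]-1)$ appearing in the fourth central moment are exactly what emerges from correctly collecting these partition terms, so the verification amounts to confirming that no term is double-counted or dropped. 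The first three moments are routine and serve as a consistency check before committing to the fourth-order expansion; I would therefore present $\mathrm{E}[Z]$ and $\mathrm{Var}[Z]$ quickly, then allocate the bulk of the effort to tracking the combinatorial coefficients in $\mu_4$.
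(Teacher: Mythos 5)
Your proposal is correct, and it sits in the same family as the paper's own (very terse) proof: both reduce the claim to chain-rule differentiation of the compound characteristic function $\chi(t)=\psi(\varphi(t))$ at $t=0$, using $\varphi(0)=1$ together with the identities (\ref{FreqSevMoments_toDeriveCompMoments_eq}) and (\ref{FreqSevMoments_toDeriveCompMoments_eq2}) to turn derivatives of $\psi$ into factorial moments of $N$ and derivatives of $\varphi$ into raw moments of $X_1$. The difference is one of bookkeeping: the paper cites (\ref{Paper_CalcCompDistr_CompCF_eq}), (\ref{FreqSevMoments_toDeriveCompMoments_eq}), (\ref{FreqSevMoments_toDeriveCompMoments_eq2}) and implicitly differentiates $\chi(t)$ and its centered version $\chi(t)e^{-it\mathrm{E}[Z]}$ directly, dismissing the rest as ``simple but lengthy'' calculus, whereas you differentiate $\ln\chi(t)=\ln\psi(\varphi(t))$ to extract the cumulants $\kappa_1,\ldots,\kappa_4$ via (\ref{CumulantsViaCF}) and then convert through $\mathrm{E}[Z]=\kappa_1$, $\mu_2=\kappa_2$, $\mu_3=\kappa_3$, $\mu_4=\kappa_4+3\kappa_2^2$. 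Your route is legitimate and arguably the cleaner of the two: cumulants of order two and higher are translation invariant, so the centering factor $e^{-it\mathrm{E}[Z]}$ never has to be carried through the fourth-order expansion, at the modest price of the final moment--cumulant conversion. It is also squarely within the paper's toolkit, since the cumulant relations are set up immediately before the proposition (with the explicit remark that cumulants are ``sometimes easier'') and the very next proposition proves the compound Poisson case exactly this way. Either route faces the same essential labor, namely the Fa\`a di Bruno partition bookkeeping at order four and the recentering of factorial frequency moments and raw severity moments, and your identification of that step as the sole genuinely delicate point, with the first two moments serving as consistency checks, is accurate.
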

\begin{proof}
\noindent This follows from the expression for characteristic
function of the compound distribution
(\ref{Paper_CalcCompDistr_CompCF_eq}) and formulas
(\ref{FreqSevMoments_toDeriveCompMoments_eq},\ref{FreqSevMoments_toDeriveCompMoments_eq2}).
The calculus is simple but lengthy. \flushright\ensuremath{\Box}
\end{proof}

\begin{example} If frequencies are Poisson distributed, $N\sim
Poisson(\lambda)$, then
\begin{eqnarray*}
&&\mathrm{E}[N]=\mathrm{Var}[N]=\mathrm{E}[(N-\mathrm{E}[N])^3]=\lambda,\\
&&\mathrm{E}[(N-\mathrm{E}[N])^4]=\lambda(1+3\lambda),
\end{eqnarray*}
\noindent and compound loss moments calculated using Proposition
\ref{CompDistrFourMoments_proposition} are
\begin{eqnarray}
\label{Paper_CalcCompDistr_PoissonCompDistrThreeMoments_eq}
&&\mathrm{E}[Z]=\lambda \mathrm{E}[X_1],\quad
\mathrm{Var}[Z]=\lambda \mathrm{E}[X^2_1],\quad
\mathrm{E}[(Z-\mathrm{E}[Z])^3]=\lambda
\mathrm{E}[X^3_1],\nonumber\\
&&\mathrm{E}[(Z-\mathrm{E}[Z])^4]=\lambda\mathrm{E}[X^4_1]+
3\lambda^2(\mathrm{E}[X^2_1])^2.
\end{eqnarray}
\noindent Moreover, if the severities are lognormally distributed,
$X_1\sim \mathcal{LN}(\mu,\sigma)$, then
\begin{equation}
\label{Lognormal_Moments}
\mathrm{E}[X^k_1]=\exp(k\mu+k^2\sigma^2/2).
\end{equation}

\end{example}

~

It is illustrative to see that in the case of compound Poisson, the
moments can easily be derived using the following proposition.

\begin{proposition}[Cumulants of compound Poisson]
The cumulants of the compound random variable $Z=X_1+\cdots+X_N$,
where $X_1,\ldots,X_N$ are independent and identically distributed,
and independent of $N$, are given by
$$
\kappa_k=\lambda \mathrm{E}[X^k_1],\quad k=1,2,\ldots
$$
\end{proposition}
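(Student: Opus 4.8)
The plan is to exploit the fact that the cumulant generating function of the compound Poisson is \emph{linear} in the severity characteristic function, which makes the cumulant definition (\ref{CumulantsViaCF}) collapse to a single term. First I would take the characteristic function of the compound Poisson loss from (\ref{CompDistr_PoissonCase_CF_eq}), namely $\chi(t)=\exp(\lambda\varphi(t)-\lambda)$, and form its logarithm
\[
\ln\chi(t)=\lambda\bigl(\varphi(t)-1\bigr).
\]
The key observation is that the exponential structure of the Poisson probability generating function turns the cumulant generating function into an affine function of $\varphi(t)$, so no products or higher powers of $\varphi$ survive to complicate the differentiation.

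Next I would substitute this expression into the definition of the cumulants (\ref{CumulantsViaCF}) and differentiate $k$ times at $t=0$. Since the term $-\lambda$ is constant, its derivatives vanish for every $k\ge 1$, leaving
\[
\kappa_k=(-i)^{k}\left.\frac{d^k\ln\chi(t)}{dt^k}\right|_{t=0}=\lambda\,(-i)^{k}\left.\frac{d^k\varphi(t)}{dt^k}\right|_{t=0}.
\]
Finally I would invoke relation (\ref{FreqSevMoments_toDeriveCompMoments_eq2}), which identifies $(-i)^{k}\,d^k\varphi(t)/dt^k\big|_{t=0}$ with the severity moment $\mathrm{E}[X_1^k]$, yielding $\kappa_k=\lambda\,\mathrm{E}[X_1^k]$ for all $k\ge 1$.

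There is essentially no serious obstacle here: the entire argument rests on the linearity of $\ln\chi(t)$ in $\varphi(t)$, which is special to the Poisson frequency and is precisely what makes cumulants the natural quantities to track for this model. The only point requiring a moment's care is the bookkeeping of the constant term $-\lambda$, which contributes only to the trivial zeroth cumulant and disappears under any positive-order differentiation; this is why the clean formula holds uniformly for all $k\ge 1$. As a consistency check at the end, the moments recovered by converting these cumulants via $\mu_2=\kappa_2$, $\mu_3=\kappa_3$ and $\mu_4=\kappa_4+3\kappa_2^2$ should reproduce exactly the compound Poisson moments in (\ref{Paper_CalcCompDistr_PoissonCompDistrThreeMoments_eq}) derived from Proposition \ref{CompDistrFourMoments_proposition}, confirming that the two routes to the moments agree.
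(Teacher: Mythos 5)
Your proof is correct and follows exactly the paper's own argument: take the logarithm of the compound Poisson characteristic function $\chi(t)=\exp(\lambda\varphi(t)-\lambda)$, note it is affine in $\varphi(t)$, differentiate $k$ times at $t=0$ using the cumulant definition (\ref{CumulantsViaCF}), and identify the result as $\lambda\,\mathrm{E}[X_1^k]$ via (\ref{FreqSevMoments_toDeriveCompMoments_eq2}). Your write-up simply makes explicit the intermediate steps (the linearity observation and the vanishing of the constant term) that the paper's terse one-line calculation leaves implicit.
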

\begin{proof}
Using the definition of cumulants (\ref{CumulantsViaCF}) and the
characteristic function for compound Poisson
(\ref{CompDistr_PoissonCase_CF_eq}), calculate
$$
\kappa_k=(-i)^{k}\left.\frac{d^k
\ln\chi(t)}{dt^k}\right|_{t=0}=\lambda (-i)^{k}\left.\frac{d^k
\varphi(t)}{dt^k}\right|_{t=0}=\lambda \mathrm{E}[X^k_i], \quad
k=1,2,\ldots
$$
\flushright\ensuremath{\Box}
\end{proof}

\section{Value-at-Risk and Expected
Shortfall}\label{Paper_CompDistr_VaRandES_sec}\index{Value-at-Risk}\index{expected
shortfall} Having calculated the compound loss distribution, the
risk measures such as Value-at-Risk (VaR) and expected shortfall
should be evaluated. Analytically, VaR of the compound loss is
calculated as the inverse of the compound distribution
\begin{equation}
\mathrm{VaR}_\alpha[Z]=H^{- 1}(\alpha)=\inf \{{z\in\mathbb{R}}:\Pr
[Z
> z] \le 1 - \alpha\}
\end{equation}
 and the
expected shortfall of the compound loss above the quantile
$q_\alpha=\mathrm{VaR}_\alpha[Z]$, assuming that $q_\alpha>0$, is
\begin{eqnarray}
\label{Paper_CalcCompDistr_ES_eq} \mathrm{ES}_\alpha[Z] &=&
\mathrm{E}[Z\vert Z \ge q_\alpha]=\frac{1}{1 -
H(q_\alpha)}\int\limits_{q_\alpha}^\infty {z h(z)dz} \nonumber\\
&=& \frac{\mathrm{E}[Z]}{1 - H(q_\alpha)} - \frac{1}{1 -
H(q_\alpha)} \int\limits_0^{q_\alpha} {z h(z)dz},
\end{eqnarray}

\noindent where $\mathrm{E}[Z]=\mathrm{E}[N]\mathrm{E}[X_1]$ is the
mean of compound loss $Z$. Note that $\mathrm{ES}_\alpha[Z]$ is
defined for a given quantile $q_\alpha$, that is, the quantile $H^{
- 1}(\alpha)$ has to be computed first. It is easy to show (see
formulas (40-43) in Luo and Shevchenko (2009)\nocite{LuSh09}) that
in the case of nonnegative severities, the above integral can be
calculated via characteristic function as
\begin{eqnarray}
\label{Paper_CalcCompDistr_ESviaCF_eq} \mathrm{ES}_\alpha[Z] &=&
\frac{1}{1 - H(q_\alpha)} \nonumber\\
&\;\;\times&\left[\mathrm{E}[Z] - H(q_\alpha)q_\alpha +
\frac{2q_\alpha}{\pi}\int\limits_0^\infty\mathrm{Re}\left[\chi
(x/q_\alpha)\right]\frac{1 - \cos x}{x^2}dx\right].
\end{eqnarray}
\begin{remark}

~

\begin{itemize}
\item Strictly speaking, in the above formulas
(\ref{Paper_CalcCompDistr_ES_eq}) and
(\ref{Paper_CalcCompDistr_ESviaCF_eq}), we assumed that the quantile
is positive, $q_\alpha>0$, i.e. $\alpha>\Pr[Z=0]$ and we do not have
complications due to discontinuity at zero. The case of $q_\alpha=0$
is not really important to operational risk practice, but can easily
be treated if required.

\item In the above formulas
(\ref{Paper_CalcCompDistr_ES_eq}) and
(\ref{Paper_CalcCompDistr_ESviaCF_eq}), $H(q_\alpha)$ can be
replaced by $\alpha$. We kept $H(q_\alpha)$, so that the formulas
can easily be modified if expected exceedance $\mathrm{E}[Z\vert Z
\ge L]$ should be calculated. In this case, $q_\alpha$ should be
replaced by $L$ in these formulas.
\end{itemize}
\end{remark}

\section{Monte Carlo Method}\index{Monte Carlo}
\label{Paper_CalcCompDistr_MC_section} The easiest numerical method
to calculate the compound loss distribution is Monte Carlo (MC) with
the following logical steps.

\begin{algorithm}[Monte Carlo for compound loss distribution]
~
\begin{enumerate}
\item For $k = 1,...,K$

\begin{enumerate}
\item Simulate the number of events $N$ from the frequency
distribution;

\item Simulate independent severities $X_1,\ldots,X_N$
from the severity distribution;

\item Calculate $Z_k = \sum\nolimits_{i = 1}^N {X_i}$.
\end{enumerate}

\item Next $k$ (i.e. do an increment $k=k+1$ and return to step 1).
\end{enumerate}
All random numbers simulated in the above are independent.
\end{algorithm}


Obtained $Z_1,\ldots,Z_K$ are samples from a compound distribution
$H(\cdot)$. Distribution characteristics can be estimated using the
simulated samples in the usual way described in many textbooks.
Here, we just mention the quantile and expected shortfall which are
of primary importance for operational risk.

\subsection{Quantile Estimate} \index{Monte Carlo!quantile
estimate}Denote samples $Z_1,\ldots,Z_K$ sorted into the ascending
order as $\widetilde{Z}_1 \le \ldots \le \widetilde{Z}_K$, then a
standard estimator of the quantile $q_\alpha=H^{ - 1} (\alpha)$ is

\begin{equation}
\label{Paper_calcCompDistr_QuantileEstim_SimpleMC_eq}
\widehat{Q}_\alpha=\widetilde{Z}_{\lfloor {K\alpha} \rfloor +1}.
\end{equation}

\noindent Here, $\left\lfloor . \right\rfloor $ denotes rounding
downward. Then, for a given realisation of the sample
$\bm{Z}=\bm{z}$, the quantile estimate is
$\widehat{q}_\alpha=\widetilde{z}_{\lfloor {K\alpha} \rfloor +1}$.
It is important to estimate numerical error (due to the finite
number of simulations $K)$ in the quantile estimator. Formally, it
can be assessed using the following asymptotic result
\begin{equation}
\label{Paper_CalcCompDistr_quantile_convergence_eq}
\frac{h(q_\alpha)\sqrt{K}}{\sqrt{\alpha(1-\alpha)}}(\widehat{Q}_\alpha-q_\alpha)
\to \mathcal{N}(0,1),\quad \mathrm{as}\quad K\to\infty;
\end{equation}
\noindent see e.g. Stuart and Ord (1994, pp.356-358)\nocite{StOr94}
and Glasserman (2004, p.490)\nocite{Glasserman04}. This means that
the quantile estimator $\widehat{Q}_\alpha$ converges to the true
value ${q}_\alpha$ as the sample size $K$ increases and
asymptotically $\widehat{Q}_\alpha$ is normally distributed with the
mean ${q}_\alpha$ and standard deviation
\begin{equation}
\mathrm{stdev}[\widehat{Q}_\alpha]=\frac{\sqrt{\alpha(1-\alpha)}}{h(q_\alpha)\sqrt{K}}.
\end{equation}
\noindent However, the density $h(q_\alpha)$ is not known and the
use of the above formula is difficult. In practice, the error of the
quantile estimator is calculated using a non-parametric statistic by
forming a conservative confidence interval
$[\widetilde{Z}^{(r)},\widetilde{Z}^{(s)}]$ to contain the true
quantile value $q_\alpha$ with the probability at least $\gamma $:
\begin{equation}
\label{Paper_CalcCompDistr_QuantileNonParamConfInterval}
\Pr[\widetilde{Z}_r\le q_\alpha \le \widetilde{Z}_s]\ge \gamma,
\quad 1\le r<s\le K.
\end{equation}
\noindent Indices $r$ and $s$ can be found by utilising the fact
that the true quantile $q_\alpha$ is located between
$\widetilde{Z}_M$ and $\widetilde{Z}_{M+1}$ for some $M$. The number
of losses $M$ not exceeding the quantile $q_\alpha$ has a binomial
distribution, $Bin(K,\alpha)$, because it is the number of successes
from $K$ independent and identical attempts with success probability
$\alpha$. Thus the probability that the interval
$[\widetilde{Z}_r,\widetilde{Z}_s]$ contains the true quantile is
simply
\begin{equation}
\label{Paper_CompDistr_QuantileBoundsExact_eq} \Pr[r\le M \le s-1]=
\sum_{i=r}^{s-1}\left(\begin{array}{c}
K \\
i
\end{array}\right)
\alpha^i(1-\alpha)^{K-i}.
\end{equation}
One typically tries to choose  $r$ and $s$ that are symmetric around
and closest to the index $\left\lfloor {K\alpha} \right\rfloor +1$,
and such that the probability
(\ref{Paper_CompDistr_QuantileBoundsExact_eq}) is not less than the
desired confidence level $\gamma$. The mean and variance of the
binomial distribution are $K\alpha$ and $K\alpha(1 - \alpha)$
respectively. For large $K$, approximating the binomial by the
normal distribution with these mean and variance leads to a simple
approximation for the conservative confidence interval bounds:
\begin{eqnarray}
\label{Paper_CalcCompDistr_QuatileBounds_eq}
 r &=& \left\lfloor l \right\rfloor ,\quad l = K\alpha -
F_N^{ - 1} ((1 + \gamma ) / 2)\sqrt {K\alpha(1 - \alpha)} , \nonumber \\
 s &=& \left\lceil u \right\rceil ,\quad u = K\alpha + F_N^{ - 1} ((1 +
\gamma ) / 2)\sqrt {K\alpha(1 - \alpha)},
\end{eqnarray}

\noindent where $\left\lceil . \right\rceil $ denotes rounding
upwards and $F_N^{-1}(\cdot)$ is the inverse of the standard normal
distribution $\mathcal{N}(0,1)$. The above formula works very well
for $K\alpha(1 - \alpha) \ge 50$ approximately.

\begin{remark}

~

\begin{itemize}
\item A large number of simulations, typically $K \ge 10^5$, should be
used to achieve a good numerical accuracy for the 0.999 quantile.
However, a priori, the number of simulations required to achieve a
specific accuracy is not known. One of the approaches is to continue
simulations until a desired numerical accuracy is achieved.

\item If the number of simulations to get acceptable accuracy is very large
(e.g. $K>10^7$) then you might not be able to store the whole array
of samples $Z_1,\ldots,Z_K$ when implementing the algorithm, due to
computer memory limitations. However, if you need to calculate just
the high quantiles then you need to save only $\left\lfloor
{K\alpha} \right\rfloor +1$ largest samples to estimate the quantile
(\ref{Paper_calcCompDistr_QuantileEstim_SimpleMC_eq}). This can be
done by using the sorting \emph{on the fly}\index{sorting on the
fly} algorithms, where you keep a specified number of largest
samples as you generate the new samples; see Press et al (2002,
Section 8.5)\nocite{PrTeVeFl02}. Moments (mean, variance, etc) can
also be easily calculated \emph{on the fly} without saving all
samples into the computer memory.

\item To use
(\ref{Paper_CalcCompDistr_QuatileBounds_eq}) for estimation of the
quantile numerical error, it is important that MC samples
$Z_1,\ldots,Z_K$ are independent and identically distributed. If the
samples are correlated, then
(\ref{Paper_CalcCompDistr_QuatileBounds_eq}) can significantly
underestimate the error. In this case, one can use \emph{batch
sampling} or \emph{effective sample size} methods; see e.g. Kass et
al (1998)\nocite{KasCG98}.
\end{itemize}
\end{remark}

\begin{example} Assume that $K=5\times 10^4$ independent samples were drawn from
$\mathcal{LN}(0,2)$. Suppose that we would like to construct a
conservative confidence interval to contain the 0.999 quantile with
probability at least $\gamma=0.95$. Then, sort the samples in
ascending order and using
(\ref{Paper_CalcCompDistr_QuatileBounds_eq}) calculate $F_N^{ - 1}
((1 + \gamma ) / 2)\approx 1.96$, $r=49936$ and $s=49964$ and
$\lfloor K\alpha \rfloor +1=49951$.
\end{example}

\subsection{Expected Shortfall Estimate}\index{Monte Carlo!expected
shortfall} Given independent samples $Z_1,\ldots,Z_K$ from the same
distribution and the estimator $\widehat{Q}_\alpha$ of
$\mathrm{VaR}_\alpha[Z]$, a typical estimator for expected shortfall
$\omega_\alpha=\mathrm{E}[Z|Z\geq \mathrm{VaR}_\alpha[Z]]$ is
\begin{equation}
\label{ES_MCestimtor} \widehat{\Omega}_\alpha= \frac{\sum_{k=1}^K
Z_k {\bf{1}}_{\{Z_k\geq\widehat{Q}_\alpha\}}}{\sum_{k=1}^K
{\bf{1}}_{\{Z_k\geq\widehat{Q}_\alpha\}}}=\frac{\sum_{k=1}^K Z_k
{\bf{1}}_{\{Z_k\geq\widehat{Q}_\alpha\}}}{K-\lfloor {K\alpha}
\rfloor}.
\end{equation}
Here, ${\bf{1}}_{\{\cdot\}}$ is a standard indicator symbol defined
as $1$ if condition in $\{\cdot\}$ is true and $0$ otherwise.
Formula (\ref{ES_MCestimtor}) gives an expected shortfall estimate
$\widehat{\omega}_\alpha$ for a given sample realisation,
$\bm{Z}=\bm{z}$. From the strong law of large numbers applied to the
numerator and denominator and the convergence of the quantile
estimator (\ref{Paper_CalcCompDistr_quantile_convergence_eq}), it is
clear that
\begin{equation}
\widehat{\Omega}_\alpha\to\omega_\alpha
\end{equation}
with probability 1, as the sample size increases. If we assume that
the quantile $q_\alpha$  is known, then in the limit $K\to\infty$,
the central limit theorem gives
\begin{equation}
\frac{\sqrt{K}}{\sigma}(\widehat{\Omega}_\alpha - \omega_\alpha)\to
\mathcal{N}(0,1),
\end{equation}
\noindent where $\sigma$, for a given realisation $\bm{Z}=\bm{z}$,
can be estimated as
$$
\widehat{\sigma}^2=K\frac{\sum_{k=1}^K
(z_k-\widehat{\omega}_\alpha)^2
{\bf{1}}_{z_k\geq{q}_\alpha}}{\left(\sum_{k=1}^K
{\bf{1}}_{z_k\geq{q}_\alpha}\right)^2}.
$$
\noindent Then, the standard deviation of $\widehat{\Omega}_\alpha$
is estimated by $\widehat{\sigma}/\sqrt{K}$; see Glasserman
(2005)\nocite{Glasserman05}. However, it will underestimate the
error in expected shortfall estimate because the quantile $q_\alpha$
is not known and estimated itself by $\widehat{q}_\alpha$.
Approximation for asymptotic standard deviation of expected
shortfall estimate can be found in Yamai and Yoshiba (2002, Appendix
1)\nocite{YaYo02}. In general, the standard deviation of the MC
estimates can always be evaluated by simulating $K$ samples many
times. For heavy-tailed distributions and high quantiles, it is
typically observed that the error in quantile estimate is much
smaller than the error in expected shortfall estimate.

\begin{remark}
Expected shortfall does not exist for distributions with infinite
mean. Such distributions were reported in the analysis of
operational risk losses; see Moscadelli (2004)\nocite{Moscadelli04}.
\end{remark}

\section{Panjer Recursion}\index{Panjer recursion}
\label{Paper_CalcCompDistr_Panjer_sec} It appears that, for some
class of frequency distributions, the compound distribution
calculation via the convolution
({\ref{Paper_CalcCompDistr_CompLossConvolution_eq}) can be reduced
to a simple recursion introduced  by Panjer (1981)\nocite{Panjer81}
and referred to as Panjer recursion. A good introduction of this
method in the context of operational risk can be found in Panjer
(2006, Sections 5 and 6)\nocite{Panjer06}. Also, a detailed
treatment of Panjer recursion and its extensions is given in a
recently published book Sundt and Vernic (2009)\nocite{SaVe09}.
Below we summarise the method and discuss implementation issues.

Firstly, Panjer recursion is designed for discrete severities. Thus,
to apply the method for operational risk, where severities are
typically continuous,  the continuous severity should be replaced
with the discrete one. For example, one can round all amounts to the
nearest multiple of monetary unit $\delta$, e.g. to the nearest USD
1000. Define
\begin{equation}
f_k=\Pr[X_1=k\delta], \quad p_k=\Pr[N=k],\quad h_k=\Pr[Z=k\delta],
\end{equation}
\noindent with $f_0=0$ and $k=0,1,\ldots$ . Then, the discrete
version of (\ref{Paper_CalcCompDistr_CompLossConvolution_eq}) is
\begin{eqnarray}
\label{Paper_CalcCompDistr_CompLossConvolutionDiscrete_eq}
h_n&=&\sum_{k=1}^{n}p_k f^{(k)\ast}_n,\quad n\geq 1, \nonumber \\
h_0&=&\Pr[Z=0]=\Pr[N=0]=p_0,
\end{eqnarray}
\noindent where $f^{(k)\ast}_n=\sum_{i=0}^{n}f^{(k-1)\ast}_{n-i}f_i$
with $f^{(0)\ast}_0=1$ and $f^{(0)\ast}_n=0$ if $n\geq 1$.
\begin{remark}
~
\begin{itemize}
\item Note that the condition $f_0=\Pr[X_1=0]=0$ implies that
$f^{(k)\ast}_n=0$ for $k>n$ and thus the above summation is up to
$n$ only.

\item If $f_0>0$, then $f^{(k)\ast}_n>0$ for all $n$ and $k$; and the
upper limit in summation
(\ref{Paper_CalcCompDistr_CompLossConvolutionDiscrete_eq}) should be
replaced by infinity.

\item The number of operations to
calculate $h_0,h_1,\ldots,h_n$ using
(\ref{Paper_CalcCompDistr_CompLossConvolutionDiscrete_eq})
explicitly is of the order of $n^3$.

\end{itemize}
\end{remark}

If the maximum value for which the compound distribution should be
calculated is large, the number of computations become prohibitive
due to $O(n^3)$ operations. Fortunately, if the frequency $N$
belongs to the so-called Panjer classes,
(\ref{Paper_CalcCompDistr_CompLossConvolutionDiscrete_eq}) is
reduced to a simple recursion introduced  by Panjer
(1981)\nocite{Panjer81} and referred to as Panjer recursion.

\begin{theorem}[Panjer recursion]
If the frequency probability mass function $p_n$, $n=0,1,\ldots$
satisfies
\begin{equation}
\label{Paper_CalcCompDistr_PanjerClass_eq}
p_n=\left(a+\frac{b}{n}\right)p_{n-1}, \quad \mbox{for}\quad n\geq 1
\quad \mbox{and } a,b\in \mathbb{R},
\end{equation}
then it is said to be in Panjer class $(a,b,0)$ and the compound
distribution
(\ref{Paper_CalcCompDistr_CompLossConvolutionDiscrete_eq}) satisfies
the recursion
\begin{eqnarray}
\label{Paper_CalcCompDistr_PanjerRecursion_eq}
h_n&=&\frac{1}{1-af_0}\sum_{j=1}^{n}\left(a+\frac{bj}{n}\right)f_jh_{n-j},
\quad n\geq 1, \nonumber \\
h_0&=&\sum\limits_{k = 0}^\infty {(f_0)^k p_k}.
\end{eqnarray}

\end{theorem}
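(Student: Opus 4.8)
The plan is to work from the exact discrete convolution representation (\ref{Paper_CalcCompDistr_CompLossConvolutionDiscrete_eq}), written in the fully general form $h_n=\sum_{k=0}^{\infty}p_k f^{(k)\ast}_n$ that remains valid whether or not $f_0$ vanishes (as noted in the remark preceding the theorem, the upper limit is $\infty$ once $f_0>0$). I would then combine two ingredients: the defining recursion (\ref{Paper_CalcCompDistr_PanjerClass_eq}) for the frequency probabilities, and a symmetry identity for the convolution powers $f^{(k)\ast}_n$ that encodes the fact that $X_1,\ldots,X_k$ are independent and identically distributed. Substituting $p_k=(a+b/k)p_{k-1}$ for $k\ge 1$ and splitting the resulting sum into its $a$-part and its $(b/k)$-part is the organising idea; the factor $1/k$ in the second part is exactly what the symmetry identity is designed to absorb.

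The key lemma I would prove first is that, for every $k\ge 1$ and $n\ge 0$,
\[
\sum_{j=0}^{n} j\,f_j\,f^{(k-1)\ast}_{n-j}=\frac{n}{k}\,f^{(k)\ast}_n.
\]
This follows from exchangeability: conditioning on $S_k:=X_1+\cdots+X_k=n$, each $\mathrm{E}[X_i\mid S_k=n]$ is the same by symmetry and the $k$ of them sum to $n$, so $\mathrm{E}[X_1\mid S_k=n]=n/k$. Writing $\mathrm{E}[X_1\mathbf{1}_{\{S_k=n\}}]$ as $\sum_{j} j\,\Pr[X_1=j]\,\Pr[X_2+\cdots+X_k=n-j]$ and equating it to $(n/k)\Pr[S_k=n]$ gives the identity, since $\Pr[S_k=n]=f^{(k)\ast}_n$ and $\Pr[X_2+\cdots+X_k=n-j]=f^{(k-1)\ast}_{n-j}$.

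With the lemma in hand, for $n\ge 1$ I would insert $p_k=(a+b/k)p_{k-1}$ into $h_n=\sum_{k\ge 1}p_k f^{(k)\ast}_n$ (the $k=0$ term drops since $f^{(0)\ast}_n=0$ for $n\ge1$). In the $a$-part I expand $f^{(k)\ast}_n=\sum_{j=0}^n f_j f^{(k-1)\ast}_{n-j}$, interchange the two summations (legitimate since all terms are nonnegative, by Tonelli), and relabel the summation index as $m=k-1$ to recognise $\sum_{m\ge0}p_m f^{(m)\ast}_{n-j}=h_{n-j}$; this yields $a\sum_{j=0}^n f_j h_{n-j}$. In the $(b/k)$-part I use the lemma to replace $f^{(k)\ast}_n/k$ by $(1/n)\sum_{j=0}^n j\,f_j f^{(k-1)\ast}_{n-j}$ and perform the same interchange and relabelling, obtaining $(b/n)\sum_{j=0}^n j\,f_j h_{n-j}$. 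Adding the two parts gives $h_n=\sum_{j=0}^n\big(a+bj/n\big)f_j h_{n-j}$.

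The recursion then follows by isolating the $j=0$ term, which equals $a f_0 h_n$, and moving it to the left-hand side to produce the factor $(1-af_0)^{-1}$. The boundary value is immediate: $f^{(k)\ast}_0=\Pr[S_k=0]=(f_0)^k$, so $h_0=\sum_{k\ge0}p_k (f_0)^k$. The main obstacle is the symmetry lemma: everything downstream is bookkeeping with nonnegative series, but the lemma is where the i.i.d.\ assumption genuinely enters, and some care is needed to state the conditional-expectation argument cleanly. Equivalently, one can prove the identity purely combinatorially by differentiating the generating-function relation $\big(\sum_j f_j z^j\big)^{k}$ and comparing coefficients of $z^n$, avoiding probabilistic conditioning altogether.
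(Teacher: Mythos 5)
Your proposal is correct, but there is no proof in the paper to compare it against: this is a review article, and the theorem is stated without proof, the reader being referred to Panjer (1981), Sundt and Jewell (1981) and Panjer (2006). What you have reconstructed is essentially the classical argument from that literature, and it goes through as written. Your key lemma $\sum_{j=0}^{n} j\,f_j\,f^{(k-1)\ast}_{n-j}=\frac{n}{k}\,f^{(k)\ast}_n$ is exactly the exchangeability step on which the standard proof turns; the only care needed is the case $\Pr[X_1+\cdots+X_k=n]=0$, where conditional expectations are undefined, but this is harmless if you state the identity in unconditional form by symmetry, $k\,\mathrm{E}\bigl[X_1\mathbf{1}_{\{S_k=n\}}\bigr]=\sum_{i=1}^{k}\mathrm{E}\bigl[X_i\mathbf{1}_{\{S_k=n\}}\bigr]=\mathrm{E}\bigl[S_k\mathbf{1}_{\{S_k=n\}}\bigr]=n\Pr[S_k=n]$, or use your generating-function alternative, which avoids probabilistic conditioning entirely. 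The remaining steps --- splitting $p_k=(a+b/k)p_{k-1}$ into the $a$-part and the $b/k$-part, the Tonelli interchange, relabelling $m=k-1$ to recover $h_{n-j}$, and isolating the $j=0$ term $af_0h_n$ --- are sound bookkeeping. Two features of your write-up deserve emphasis as necessary rather than cosmetic. First, you correctly start from $h_n=\sum_{k\ge 0}p_k f^{(k)\ast}_n$ with infinite upper limit, consistent with the remark preceding the theorem; this is required because the theorem as stated allows $f_0>0$ (witness the factor $(1-af_0)^{-1}$ and the initial value $h_0=\psi(f_0)$), whereas the displayed form of (\ref{Paper_CalcCompDistr_CompLossConvolutionDiscrete_eq}) with upper limit $n$ presumes $f_0=0$ and would only yield a special case. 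Second, dividing by $1-af_0$ implicitly assumes $af_0\neq 1$; this holds for every member of the $(a,b,0)$ class (Poisson: $a=0$; negative binomial: $a=1-q\in(0,1)$; binomial: $a<0$) together with $f_0\le 1$, so $1-af_0>0$. Adding that one-line observation would close the only loose end in an otherwise complete proof.
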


The initial condition in
(\ref{Paper_CalcCompDistr_PanjerRecursion_eq}) is simply a
probability generating function of $N$ at $f_0$, i.e. $h_0=\psi
(f_0)$, see (\ref{Paper_CalcCompDistr_pgf_eq}). If $f_0=0$, then it
simplifies to $h_0=p_0$. It was shown in Sundt and Jewell
(1981)\nocite{SuJe81}, that
(\ref{Paper_CalcCompDistr_PanjerClass_eq}) is satisfied for the
Poisson, negative binomial and binomial distributions. The
parameters $(a,b)$ and starting values $h_0$ are listed in Table
\ref{Paper_CalcCompDistr_PanjerRecursionParameters_table}.

\begin{remark}
~
\begin{itemize}

\item If severity is restricted by a value of the largest possible
loss $m$, then the upper limit in the recursion
(\ref{Paper_CalcCompDistr_PanjerRecursion_eq}) should be replaced by
$\min(m,n)$.
\item The Panjer recursion requires $O(n^2)$ operations to calculate
$h_0,\ldots,h_n$ in comparison with asymptotic $O(n^3)$ of explicit
convolution.

\item Strong stability of Panjer recursion was established for the
Poisson and negative binomial cases; see Panjer and Wang
(1993)\nocite{PaWa93}. The accumulated rounding error of the
recursion increases linearly in $n$ with a slope not exceeding one.
Serious numerical problems may occur for the case of binomial
distribution. Typically, instabilities in the recursion appear for
significantly underdispersed frequencies of severities with a large
negative skewness which are not typical in operational risk.

\item In the case of severities from a phase-type distribution (distribution with a rational probability generating function),
the recursion (\ref{Paper_CalcCompDistr_PanjerRecursion_eq}) is
reduced to $O(n)$ operations; see Hipp (2003)\nocite{Hipp03}.
Typically, the severity distributions are not phase-type
distributions and approximation is required. This is useful for
modelling small losses but not suitable for  heavy-tailed
distributions because the phase-type distributions are light tailed;
see Bladt (2005)\nocite{Bladt05} for a review.

\end{itemize}
\end{remark}

\noindent The Panjer recursion can be implemented as follows:


\begin{algorithm}[Panjer recursion]

~

\begin{enumerate}
\item Initialization: calculate $f_0$ and $h_0$, see Table \ref{Paper_CalcCompDistr_PanjerRecursionParameters_table}, and set
$H_0=h_0$.
\item For $n = 1,2,\ldots$
\begin{enumerate}
\item Calculate $f_n$. If severity distribution is continuous, then
$f_n$ can be found as described in Section
\ref{Paper_CalcCompDistr_Panjer_disretization_sec};
\item Calculate $h_n=\frac{1}{1-af_0}\sum_{j=1}^{n}\left(a+\frac{bj}{n}\right)f_jh_{n-j}$;
\item Calculate $H_n=H_{n-1}+h_n$;
\item Interrupt the procedure if $H_n$ is larger than the required
quantile level $\alpha$, e.g. $\alpha=0.999$. Then the estimate of
the quantile $q_\alpha$ is $n\times\delta$.
\end{enumerate}
\item Next $n$ (i.e. do an increment $n=n+1$ and return to step 2).
\end{enumerate}
\end{algorithm}

\subsection{Discretisation}\index{Panjer recursion!discretisation}
\label{Paper_CalcCompDistr_Panjer_disretization_sec} Typically,
severity distributions are continuous and thus discretisation is
required. To concentrate severity, whose
 continuous distribution is $F(x)$, on $\{0,\delta,2\delta,\ldots\}$, one can choose
$\delta>0$ and use the central difference approximation
\begin{eqnarray}
\label{Paper_CompDistr_CentralDiffDisretization_eq}
f_0&=&F(\delta/2),\nonumber\\
f_n&=&F(n\delta+\delta/2)-F(n\delta-\delta/2),\quad n=1,2,\ldots\;.
\end{eqnarray}

\noindent Then the compound discrete density $h_n$ is calculated
using Panjer recursion and compound distribution is calculated as
$H_n=\sum_{i=0}^n h_i$. As an example, Table
\ref{Paper_CalcCompDistr_PanjerRecursionExample1_table} gives
results of calculation of the $Poisson(100)-\mathcal{LN}(0,2)$
compound distribution up to the 0.999 quantile in the case of step
$\delta=\text{USD 1}$. Of course the accuracy of the result depends
on the step size as shown by the results for the 0.999 quantile vs
$\delta$, see Table
\ref{Paper_CalcCompDistr_PanjerRecursionConvergence_table} and
Figure \ref{Paper_CalcCompDistr_PanjerRecursionConvergence_fig}. It
is, however, important to note that the error of the result is due
to discretisation only and there is no truncation error (i.e. the
severity is not truncated by some large value).

Discretisation can also be done via the forward and backward
differences:
\begin{eqnarray}
\label{Paper_CompDistr_FwdBackwardDiffDisretization_eq}
f_n^U=F(n\delta+\delta)-F(n\delta);\quad
f_n^L=F(n\delta)-F(n\delta-\delta).
\end{eqnarray}

\noindent These allow for calculation of the upper and lower bounds
for the compound distribution:
\begin{eqnarray}
H_n^U=\sum_{i=0}^n h_i^U;\quad H_n^L=\sum_{i=0}^n h_i^L.
\end{eqnarray}
\noindent For example, see Table
\ref{Paper_CalcCompDistr_PanjerRecursionExample2_table} presenting
results for $Poisson(100)-\mathcal{LN}(0,2)$ compound distribution
calculated using central, forward and backward differences with step
$\delta=\text{USD} 1$. The use of the forward difference $f_n^U$
gives the upper bound for the compound distribution and the use of
$f_n^L$ gives the lower bound. Thus the lower and upper bounds for a
quantile are obtained with $f_n^U$ and $f_n^L$ respectively. In the
case of Table
\ref{Paper_CalcCompDistr_PanjerRecursionExample2_table} example, the
quantile bound interval is [USD 5811, USD 5914] with the estimate
from the central difference USD 5849.

\subsection{Computational Issues}
\label{Paper_CalcCompDistr_PanjerRecursionCompIssues_subsec}
Underflow\footnote{Underflow/overflow are the cases when the
computer calculations produce a number outside the range of
representable numbers leading $0$ or $\pm\infty$ outputs
respectively.}\index{underflow}\index{overflow} in computations of
(\ref{Paper_CalcCompDistr_PanjerRecursion_eq}) will occur for large
frequencies during the initialization of the recursion. This can
easily be seen for the case of $Poisson(\lambda)$ and $f_0=0$ when
$h_0=\exp(-\lambda)$, that is, the underflow will occur for
$\lambda\gtrsim 700$ on a 32bit computer with double precision
calculations. Re-scaling $h_0$ by large factor $\gamma$ to calculate
the recursion (and de-scaling the result) does not really help
because overflow will occur for $\gamma h(n)$. The following
identity helps to overcome this problem in the case of Poisson
frequency:
\begin{equation}
H^{(m)\ast}(z;\lambda/m)=H(z;\lambda).
\end{equation}
That is, calculate the compound distribution $H(z;\lambda/m)$ for
some large $m$ to avoid underflow. Then preform $m$ convolutions for
the obtained distribution directly or via FFT; see Panjer and
Willmot (1986)\nocite{PaWi86}. Similar identity is available for
negative binomial, $NegBin(r,p)$:

\begin{equation}
H^{(m)\ast}(z;r/m)=H(z;r).
\end{equation}

\noindent In the case of binomial, $Bin(M,p)$:
\begin{equation}
H^{(m)\ast}(z; m_1)\ast H(z;m_2)=H(z;M), \end{equation} \noindent
where $m_1=\lfloor M/m \rfloor$ and $m_2=M-m_1m$.

For efficiency, one can choose $m=2^k$ so that instead of $m$
convolutions of $H(\cdot)$ only $k$ convolutions are required
$H^{(2)\ast}, H^{(4)\ast}, \ldots, H^{(2^k)\ast}$, where each term
is the convolution of the previous one with itself.

\subsection{Panjer Extensions}

The Panjer recursion formula
(\ref{Paper_CalcCompDistr_PanjerRecursion_eq}) can be extended to a
class of frequency distributions $(a,b,1)$.

\begin{definition}[Panjer class $(a,b,1)$]
The distribution is said to be in $(a,b,1)$ Panjer class if it
satisfies
\begin{equation}
\label{Paper_CalcCompDistr_PanjerClass1_eq}
p_n=\left(a+\frac{b}{n}\right)p_{n-1}, \quad \mbox{for}\quad n\geq
2\quad \mbox{and}\quad a,b\in \mathbb{R}.
\end{equation}
\end{definition}

~

\begin{theorem}[Extended Panjer recursion]\index{Panjer
recursion!extended} For the frequency distributions in a class
$(a,b,1)$:
\begin{eqnarray}
\label{Paper_CalcCompDistr_ExtendedPanjerRecursion_eq}
h_n&=&\frac{(p_1-(a+b)p_0)f_n+\sum_{j=1}^{n}\left(a+bj/n\right)f_j
h_{n-j}}{1-af_0},
\quad n\geq 1, \nonumber \\
h_0&=&\sum\limits_{k = 0}^\infty {(f_0)^k p_k}.
\end{eqnarray}
\end{theorem}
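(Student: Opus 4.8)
The plan is to start from the convolution representation of the compound density, $h_n=\sum_{k=0}^{\infty}p_k f_n^{(k)\ast}$, and exploit the fact that the class $(a,b,1)$ relation $p_k=(a+b/k)p_{k-1}$ holds for all $k\ge 2$. Because this relation is allowed to fail at $k=1$, the natural first move is to isolate the $k=1$ term, writing $h_n=p_1 f_n+\sum_{k=2}^{\infty}p_k f_n^{(k)\ast}$ for $n\ge 1$ (the $k=0$ term vanishes since $f_n^{(0)\ast}=0$ for $n\ge1$). The remaining sum is where the recursion structure is extracted, and it is precisely the separate treatment of $k=1$ that will generate the extra numerator term $(p_1-(a+b)p_0)f_n$.

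The technical heart of the argument is a convolution identity reflecting the exchangeability of the severities: for any $k\ge 1$,
\begin{equation*}
\frac{1}{k}f_n^{(k)\ast}=\sum_{j=0}^{n}\frac{j}{n}f_j f_{n-j}^{(k-1)\ast}.
\end{equation*}
I would prove this by conditioning, namely that $\mathrm{E}[X_1\mid X_1+\cdots+X_k=n\delta]=n\delta/k$, which is immediate from symmetry since the $X_i$ are i.i.d. Combined with the plain convolution recursion $f_n^{(k)\ast}=\sum_{j=0}^{n}f_j f_{n-j}^{(k-1)\ast}$, this yields
\begin{equation*}
\left(a+\frac{b}{k}\right)f_n^{(k)\ast}=\sum_{j=0}^{n}\left(a+\frac{bj}{n}\right)f_j f_{n-j}^{(k-1)\ast},
\end{equation*}
which is exactly the combination produced by the class relation.

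Substituting the class relation into $\sum_{k=2}^{\infty}p_k f_n^{(k)\ast}=\sum_{k=2}^{\infty}(a+b/k)p_{k-1}f_n^{(k)\ast}$, applying the identity above, reindexing $m=k-1$, and interchanging the two summations, I obtain $\sum_{j=0}^{n}(a+bj/n)f_j\sum_{m=1}^{\infty}p_m f_{n-j}^{(m)\ast}$. The inner sum equals $h_{n-j}-p_0\mathbf{1}_{\{j=n\}}$, since $h_{n-j}=\sum_{m=0}^{\infty}p_m f_{n-j}^{(m)\ast}$ and $f_{n-j}^{(0)\ast}=\mathbf{1}_{\{j=n\}}$. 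The correction term contributes $-(a+b)f_n p_0$, and separating the $j=0$ term $a f_0 h_n$ and moving it to the left-hand side produces the factor $1-af_0$. Collecting everything gives the claimed recursion. The initial condition $h_0=\sum_{k=0}^{\infty}(f_0)^k p_k$ is separate and immediate: $f_0^{(k)\ast}=\Pr[X_1+\cdots+X_k=0]=(f_0)^k$, so $h_0=\sum_k p_k (f_0)^k$ regardless of the class relation.

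The main obstacle is the exchangeability identity; once it is in hand, the rest is careful bookkeeping. The two delicate points are (i) keeping track of the boundary contribution at $j=n$, which is exactly what distinguishes the extended recursion's numerator from the ordinary $(a,b,0)$ case, and (ii) isolating the $j=0$ term to recover the $1-af_0$ denominator in the general situation $f_0>0$.
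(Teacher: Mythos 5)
Your proof is correct, but there is nothing in the paper to compare it against: the paper is a review and states the extended recursion without proof, deferring to the literature (Sundt and Vernic (2009); Panjer and Willmot (1992)). Your argument is the standard derivation, and the bookkeeping is right. The engine is the exchangeability identity $\tfrac{1}{k}f_n^{(k)\ast}=\sum_{j=0}^{n}\tfrac{j}{n}f_j f_{n-j}^{(k-1)\ast}$; combined with the class relation for $k\ge 2$, the isolated $k=1$ term together with the boundary contribution $f^{(0)\ast}_{n-j}=\mathbf{1}_{\{j=n\}}$ (worth $-(a+b)p_0 f_n$) produces exactly the numerator term $(p_1-(a+b)p_0)f_n$, and isolating $j=0$ gives the $1-af_0$ denominator, so the recursion follows. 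Two small points deserve explicit care in a final write-up. First, phrase the symmetry step as $\mathrm{E}\bigl[X_1\mathbf{1}_{\{S_k=n\delta\}}\bigr]=\tfrac{1}{k}\mathrm{E}\bigl[S_k\mathbf{1}_{\{S_k=n\delta\}}\bigr]=\tfrac{n\delta}{k}\Pr[S_k=n\delta]$ rather than via $\mathrm{E}[X_1\mid S_k=n\delta]$, which conditions on a possibly null event; the identity then holds trivially also when $f_n^{(k)\ast}=0$. Second, the interchange of $\sum_{k\ge 2}$ with $\sum_{j=0}^{n}$ needs a word of justification, since for the binomial case $a<0$ and the terms are not all nonnegative; it is harmless because the $j$-sum is finite and each series $\sum_{m}p_m f^{(m)\ast}_{n-j}$ converges absolutely (it is bounded by $1$). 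A useful consistency check to include: when $p_1=(a+b)p_0$, i.e.\ the $(a,b,0)$ case, your extra numerator term vanishes and the formula collapses to the ordinary Panjer recursion stated earlier in the paper, with $h_0=\psi(f_0)$ as the common initial condition.
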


The distributions of $(a,b,0)$ class are special cases of $(a,b,1)$
class. There are two types of frequency distributions in $(a,b,1)$
class:
\begin{itemize}
\item zero-truncated distributions, where $p_0=0$: i.e. zero truncated
Poisson, zero truncated binomial and zero-truncated negative
binomial.

\item zero-modified distributions, where $p_0>0$: the distributions of $(a,b,0)$ with modified probability of zero.
It can be viewed as a mixture of $(a,b,0)$ distribution and
degenerate distribution concentrated at zero.

\end{itemize}

Finally, we would like to mention a generalization of Panjer
recursion for the $(a,b,l)$ class
\begin{equation}
\label{Paper_CalcCompDistr_PanjerClassl_eq}
p_n=\left(a+\frac{b}{n}\right)p_{n-1}, \quad \mbox{for}\quad n\geq
l+1.
\end{equation}
For initial values $p_0=\cdots =p_{l-1}=0$, and in the case of
$f_0=0$, it leads to the recursion
\begin{equation}
\label{Paper_CalcCompDistr_GenPanjerRecursion_eq} h_n=p_l
f^{(l)\ast}_n+\sum_{j=1}^{n}\left(a+bj/n\right)f_jh_{n-j},
\quad n\geq l. \nonumber \\
\end{equation}
The distribution in this class is, for example, $l-1$ truncated
Poisson. For an overview of high order Panjer recursions, see Hess
et al (2002)\nocite{HeLiSc02}. Other types of recursions
\begin{equation}
p_n=\sum_{j=1}^k(a_j+b_j/n)p_{n-1},\quad n\geq 1,
\end{equation}
are discussed in Sundt (1992)\nocite{Sundt92}. Application of the
standard Panjer recursion in the case of the generalised frequency
distributions such as the extended negative binomial, can lead to
numerical instabilities. Generalization of the Panjer recursion that
leads to numerically stable algorithms for these cases is presented
in Gerhold et al (2009)\nocite{GeScWa09}. Discussion on multivariate
version of Panjer recursion can be found in Sundt
(1999)\nocite{Sundt99} and bivariate cases are discussed in Vernic
(1999)\nocite{Vernic99} and Hesselager (1996)\nocite{Hesselager96}.

\subsection{Panjer Recursion for Continuous Severity}\index{Panjer
recursion!continuous severity} The Panjer recursion is developed for
the case of discrete severities. The analog of Panjer recursion for
the case of continuous severities is given by the following integral
equation.

\begin{theorem}[Panjer recursion for continuous severities]
~
For frequency distributions in $(a,b,1)$ class and continuous
severity distributions on positive real line:
\begin{equation}
\label{Paper_CalcCompDistr_IntEqPanjerRecursion_eq} h(z)=p_1
f(z)+\int_0^x (a+by/z)f(y)h(z-y)dy.
\end{equation}

\end{theorem}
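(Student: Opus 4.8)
The plan is to derive the integral equation directly from the convolution series for the compound density, converting the frequency relation $p_k=(a+b/k)p_{k-1}$ into a recursion in $z$ by means of a symmetry (exchangeability) argument for the i.i.d.\ severities. Since the severity is continuous on the positive half-line, for $z>0$ the compound law has no atom, so the density analog of (\ref{Paper_CalcCompDistr_CompLossConvolution_eq}) reads
\begin{equation}
h(z)=\sum_{k=1}^\infty p_k f^{(k)\ast}(z),\quad z>0,
\end{equation}
where the $k=0$ term disappears because $f^{(0)\ast}$ is concentrated at the origin. First I would peel off the $k=1$ term, which contributes exactly $p_1 f(z)$, and then apply the defining $(a,b,1)$ relation $p_k=(a+b/k)p_{k-1}$, valid for $k\ge 2$, to the remaining terms.

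The crux is the symmetry identity for convolution powers,
\begin{equation}
\int_0^z \left(a+\frac{by}{z}\right)f(y)f^{(k-1)\ast}(z-y)\,dy=\left(a+\frac{b}{k}\right)f^{(k)\ast}(z).
\end{equation}
To prove it I would observe that $f(y)f^{(k-1)\ast}(z-y)$ is the joint density of $(X_1,S_k)$ at $(y,z)$, where $S_k=X_1+\cdots+X_k$. By exchangeability of the i.i.d.\ severities the conditional expectations $\mathrm{E}[X_i\mid S_k=z]$ coincide for all $i$, and since they sum to $\mathrm{E}[S_k\mid S_k=z]=z$ we get $\mathrm{E}[X_1\mid S_k=z]=z/k$. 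This is precisely $\int_0^z y\,f(y)f^{(k-1)\ast}(z-y)\,dy=(z/k)f^{(k)\ast}(z)$, while the $a$-part is just the convolution identity $\int_0^z f(y)f^{(k-1)\ast}(z-y)\,dy=f^{(k)\ast}(z)$.

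Substituting this identity into each term and interchanging summation and integration (justified by Tonelli, as all integrands are nonnegative densities) gives
\begin{equation}
h(z)=p_1 f(z)+\int_0^z\left(a+\frac{by}{z}\right)f(y)\sum_{k=2}^\infty p_{k-1}f^{(k-1)\ast}(z-y)\,dy,
\end{equation}
and reindexing $j=k-1$ identifies the inner sum with $h(z-y)=\sum_{j\ge 1}p_j f^{(j)\ast}(z-y)$ for $z-y>0$, which yields the claim (with upper limit $z$ in the integral).

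The main technical obstacle is the exchangeability identity $\mathrm{E}[X_1\mid S_k=z]=z/k$; once it is established, everything else is bookkeeping, so I would present it as a short separate step. A second point worth flagging is why no $p_0$ correction appears, in contrast to the discrete recursion (\ref{Paper_CalcCompDistr_ExtendedPanjerRecursion_eq}), where the coefficient of $f_n$ is $p_1-(a+b)p_0$: in the continuous setting any boundary contribution would sit at the single point $y=z$ (there $z-y=0$, where $h$ carries the atom $p_0$), and that point has Lebesgue measure zero, so it never enters the integral. For the same reason, continuity of the severity with $f(0)=0$ removes the $1-af_0$ denominator present in the discrete case.
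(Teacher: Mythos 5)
Your proof is correct, but note that the paper itself does not prove this theorem: it only refers the reader to Panjer and Willmot (1992, Theorems 6.14.1 and 6.16.1), so there is no in-paper argument to compare against. What you have written is, in substance, the classical proof given in that reference and going back to Panjer's original derivation: expand $h(z)=\sum_{k\ge 1}p_k f^{(k)\ast}(z)$, peel off the $k=1$ term, apply the $(a,b,1)$ relation for $k\ge 2$, and convert $(a+b/k)f^{(k)\ast}(z)$ into $\int_0^z (a+by/z)f(y)f^{(k-1)\ast}(z-y)\,dy$ via the exchangeability identity $\mathrm{E}[X_1\mid S_k=z]=z/k$; that identity is the heart of the matter, and your derivation of it (equal conditional expectations summing to $z$) is correct. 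Your closing remarks are also sound: the upper integration limit in the paper's display is a typo ($x$ should read $z$), and the absence of the $p_0$ and $1-af_0$ corrections of the discrete recursion (\ref{Paper_CalcCompDistr_ExtendedPanjerRecursion_eq}) is correctly explained by the atom of the compound law at the origin sitting on a Lebesgue-null set of the integration range and by $\Pr[X_1=0]=0$. One technical slip: the interchange of sum and integral is not always covered by Tonelli, because for a binomial frequency $a=-q/(1-q)<0$ while $a+b>0$, so the factor $a+by/z$ changes sign on $(0,z)$ and the summands are not all nonnegative (for Poisson and negative binomial the factor stays nonnegative, so Tonelli is fine). For the binomial the sum over $k$ is finite, so the interchange is trivial; in general one dominates by $(|a|+|b|)\,f(y)f^{(k-1)\ast}(z-y)$ and invokes Fubini under the mild condition $\sum_{k\ge 2}p_{k-1}f^{(k)\ast}(z)<\infty$. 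This caveat does not affect the validity of your argument.
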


The proof is presented in Panjer and Willmot (1992, Theorem 6.14.1
and 6.16.1)\nocite{PaWi92}. Note that the above integral equation
holds for $(a,b,0)$ class because it is a special case of $(a,b,1)$.
The integral equation
(\ref{Paper_CalcCompDistr_IntEqPanjerRecursion_eq}) is a Volterra
integral equation\index{Volterra integral equation} of the second
type. There are different methods to solve it described in Panjer
and Willmot (1992)\nocite{PaWi92}. A method of solving this equation
using hybrid MCMC (minimum variance importance sampling via
reversible jump MCMC) is presented in Peters et al (2007)
\nocite{PeJoDo07}.

\section{Fast Fourier Transform}\index{Fast Fourier Transform}
\label{Paper_CompDistr_FFT_sec} The FFT is another efficient method
to calculate compound distributions via the inversion of the
characteristic function. The method has been known for many decades
and originates from the signal processing field. The existence of
the algorithm became generally known in the mid-1960s, but it was
independently discovered by many researchers much earlier. One of
the early books on FFT is Brigham (1974)\nocite{Brigham74}. A
detailed explanation of the method in application to aggregate loss
distribution can be found in Robertson (1992)\nocite{Robertson92}.
In our experience, operational risk practitioners in banking regard
the method as difficult and rarely use it in practice. In fact, it
is a very simple algorithm to implement, although to make it really
efficient, especially for heavy-tailed distribution, some
improvements are required. Below we describe the essential steps and
theory required for successful implementation of the FFT for
operational risk.

 As with Panjer recursion case, FFT works with
discrete severity and based on the discrete Fourier transformation
defined as follows.

\begin{definition}[Discrete Fourier
transformation]

~

\noindent For a sequence $f_0,f_1,\ldots,f_{M-1}$, the discrete
Fourier transformation (DFT)\index{discrete Fourier transformation}
is defined as

\begin{equation}
\phi_k=\sum_{m=0}^{M-1}f_m\exp\left(\frac{2\pi i}{M}m k\right),\quad
k=0,1,\ldots,M-1
\end{equation}

\noindent and the original sequence $f_k$ can be recovered from
$\phi_k$ by the inverse transformation

\begin{equation}
f_k=\frac{1}{M}\sum_{m=0}^{M-1}\phi_m\exp\left(-\frac{2\pi i}{M}m
k\right),\quad k=0,1,\ldots,M-1.
\end{equation}

\end{definition}

~

Here, $M$ is some truncation point. It is easy to see that to
calculate $M$ points of $\phi_m$, the number of operations is of the
order of $M^2$, i.e. $O(M^2)$. If $M$ is a power of 2, then DFT can
be efficiently calculated via FFT algorithms with the number of
computations $O(M\log_2 M)$. This is due to the property that DFT of
length $M$ can be represented as the sum of DFT over even points
$\phi_k^e$ and DFT over odd points $\phi_k^e$:
\begin{eqnarray*}
\phi_k&=&\phi_k^e+\exp\left(\frac{2\pi i}{M}k\right)\phi_k^o;
\nonumber\\
\phi_k^e&=&\sum_{m=0}^{M/2-1}f_{2m}\exp\left(\frac{2\pi
i}{M}m k\right);\nonumber \\
\phi_k^o&=&\sum_{m=0}^{M/2-1}f_{2m+1}\exp\left(\frac{2\pi i}{M}m
k\right).
\end{eqnarray*}
\noindent Subsequently, each of these two DFTs can be calculated as
a sum of two DFTs of length $M/4$. For example, $\phi_k^e$ is
calculated as a sum of $\phi_k^{ee}$ and $\phi_k^{eo}$. This
procedure is continued until the transforms of the length 1. The
latter is simply identity operation. Thus every obtained pattern of
odd and even DFTs will be $f_m$ for some $m$:
$$
\phi_k^{eo\cdots ooe}=f_m.
$$
\noindent The bit reversal procedure can be used to find $m$ that
corresponds to a specific pattern. That is, set $e=0$ and $o=1$,
then the reverse pattern of $e$'s and $o$'s is the value of $m$ in
binary. Thus the logical steps of FFT are as follows.

\begin{algorithm}[Simple FFT]

~

\begin{enumerate}
\item Sort the data in a bit-reversed order. The obtained points are
simply one-point transforms.
\item Combine the neighbor points into non-overlapping pairs to get two-point transforms.
Then combine two-point transforms into 4-point transforms and
continue subsequently until the final $M$ point transform is
obtained. Thus there are $\log_2M$ iterations and each iteration
involves of the order of $M$ operations.
\end{enumerate}
\end{algorithm}

The implementation of a basic FFT algorithm is very simple;
corresponding C or Fortran codes can be found in Press et al (2002,
Chapter 12)\nocite{PrTeVeFl02}.

\subsection{Compound Distribution via FFT}\index{compound
distribution!Fast Fourier Transform} Calculation of the compound
distribution via FFT can be done using the following logical steps.

\begin{algorithm}[Compound Distribution via FFT]
~
\begin{enumerate}
\item Discretise severity to obtain
$$f_0, f_1,\ldots, f_{M-1},$$
where $M=2^r$ with integer $r$ and $M$ is the truncation point in
the aggregate distribution;

\item Using FFT, calculate the characteristic function of the severity
$$\varphi_0,\ldots,\varphi_{M-1};$$

\item Calculate the characteristic function of the compound distribution using
(\ref{Paper_CalcCompDistr_CompCF_eq}), i.e.
$$\chi_m = \psi
(\varphi_m),\quad m=0,1,\ldots,M-1.
$$

\item Perform inverse FFT (which is the same as FFT except the
change of sign under the exponent and factor $1/M$) applied to
$\chi_0,\ldots,\chi_{M-1}$ to obtain the compound distribution
$h_0,h_1,\ldots,h_{M-1}.$

\end{enumerate}
\end{algorithm}

\begin{remark}
To calculate the compound distribution in the case of the severity
distribution $F(x)$ with a finite support (i.e. $0<a\le x\le
b<\infty$) one can set $F(x)=0$ for $x$ outside the support range
when calculating discretised severity $f_0,\ldots,f_{M-1}$ using
(\ref{Paper_CompDistr_CentralDiffDisretization_eq}). For example,
this is the case for distribution of losses exceeding some
threshold. Note that we need to set $F(x)=0$ in the range
$x\in[0,a)$ due to the finite probability of zero compound loss.
\end{remark}

\subsection{Aliasing Error and Tilting}\index{Fast Fourier Transform!aliasing
error}\index{Fast Fourier Transform!tilting} If there is no
truncation error in the severity discretisation, i.e.
$\sum_{m=0}^{M-1}f_m=1$, then FFT procedure calculates the compound
distribution on $m=0,1,\ldots,M$. That is, the mass of compound
distribution beyond $M$ is ``wrapped" and appears in the range
$m=0,\ldots,M-1$ (the so-called \emph{aliasing error}\index{aliasing
error}). This error is larger for heavy-tailed severities. To
decrease the error for compound distribution on $0,1,\ldots,n$, one
has to take $M$ much larger than $n$. If the severity distribution
is bounded and $M$ is larger than the bound, then one can put zero
values for points above the bound (the so-called padding by zeros).
Another way to reduce the error is to apply some transformation to
increase the tail decay (the so-called
\emph{tilting}\index{tilting}). The exponential tilting technique
for reducing aliasing error under the context of calculating
compound distribution was first investigated by Grubel and
Hermesmeier (1999)\nocite{GrHe99}. Many authors suggest the
following tilting transformation:
\begin{equation}
\label{Paper_CalcCompDistr_tiltingTransform_eq} \widetilde
f_j=\exp(-j\theta)f_j, \quad j=0,1,\ldots, M-1,
\end{equation}
\noindent where $\theta>0$. This transformation commutes with
convolution in a sense that convolution of two functions $f(x)$ and
$g(x)$ equals the convolution of the transformed functions
$\widetilde f(x)=f(x)\exp(-\theta x)$ and $\widetilde
g(x)=g(x)\exp(-\theta x)$ multiplied by $\exp(\theta x)$, i.e.
\begin{equation}
\label{Paper_CalcCompDistr_tiltingTransform_eq2} (f\ast
g)(x)=e^{\theta x} (\widetilde f\ast\widetilde g)(x).
\end{equation}
This can easily be shown using the definition of convolution. Then
calculation of the compound distribution is performed using the
transformed severity distribution as follows.

\begin{algorithm}[Compound distribution via FFT with tilting]
~
\begin{enumerate}
\item Define $f_0,f_1,\ldots,f_{M-1}$ for some large $M$;
\item Perform tilting, i.e. calculate the transformed function $\widetilde f_j=\exp(-j\theta)f_j$, $j=0,1,\ldots,
M-1$;
\item Apply FFT to a set $\widetilde f_0,\ldots,\widetilde f_{M-1}$
to obtain $\widetilde\phi_0,\ldots,\widetilde\phi_{M-1}$;
\item Calculate $\widetilde\chi_m = \psi
(\widetilde\phi_m), m=0,1,\ldots,M-1$;
\item Apply the inverse FFT to the set
$\widetilde\chi_0,\ldots,\widetilde\chi_{M-1}$, to obtain
$\widetilde h_0,\ldots,\widetilde h_{M-1}$;
\item Untilt by calculating final compound distribution as
$h_j=\widetilde h_j\exp(\theta j)$.
\end{enumerate}
\end{algorithm}

This tilting procedure is very effective in reducing the aliasing
error. The parameter $\theta$ should be as large as possible but not
producing under- or overflow\index{underflow}\index{overflow} that
will occur for very large $\theta$. It was reported in Embrechts and
Frei (2009)\nocite{EmFr08} that the choice $M\theta\approx 20$
works well for standard double precision (8 bytes) calculations.
Evaluation of the probability generating function $\psi(\cdot)$ of
the frequency distribution may lead to the problem of underflow in
the case of large frequencies that can be resolved using methods
described in Section
\ref{Paper_CalcCompDistr_PanjerRecursionCompIssues_subsec}.

\begin{example}

To demonstrate the effectiveness of the tilting, consider the
following calculations:
\begin{itemize}
\item FFT with the central difference discretisation, where the tail probability compressed into the last point
$f_{M-1}=1-F(\delta(M-1)-\delta/2)$. Denote the corresponding
quantile estimator as $Q^{(1)}_{0.999}$;
\item FFT with the central difference discretisation with the tail probability ignored, i.e.
$f_{M-1}=F(\delta(M-1)+\delta/2)-F(\delta(M-1)-\delta/2)$. Denote
the corresponding quantile estimator as $Q^{(2)}_{0.999}$;
\item FFT with the central difference discretisation utilising
tilting $Q^{(tilt)}_{0.999}$. The tilting parameter $\theta$ is
chosen to be $\theta=20/M$.

\end{itemize}

The calculation results presented in Table
\ref{Paper_CalcCompDistr_FFTresults_table} demonstrate the
efficiency of the tilting. If FFT is performed without tilting then
the truncation level for the severity should exceed the quantile
significantly. In this particular case it should exceed by
approximately factor of 10 to get the exact result for this
discretisation step. The latter is obtained by Panjer recursion that
does not require the discretisation beyond the calculated quantile.
Thus the FFT and Panjer recursion are approximately the same in
terms of computing time required for quantile estimate in this case.
However, once the tilting is utilised, the cut off level does not
need to exceed the quantile significantly to obtain the exact result
-- making FFT superior to Panjer recursion. In this example, the
computing time\footnote{Computing time is quoted for a standard Dell
laptop Latitude D820 with
 Intel(R) CPU T2600 @ 2.16 GHz and 3.25 GB of RAM.} for FFT with tilting is 0.17sec in comparison with 5.76sec
of Panjer recursion, see Table
\ref{Paper_CalcCompDistr_PanjerRecursionConvergence_table}. Also, in
this case, the treatment of the severity tail by ignoring it or
absorbing into the last point $f_{M-1}$ does not make any difference
when tilting is applied.

\end{example}

\section{Direct Numerical Integration}
\label{Paper_CompDistr_DNI_sec} In the case of nonnegative
severities, the distribution of the compound loss is given by
(\ref{Paper_CalcCompDistr_DirectIntergation1_eq}), i.e.

\begin{equation}
\label{Paper_CalcCompDistr_DirectIntergation_eq} H(z) = \frac{2}{\pi
}\int\limits_0^\infty {\mathrm{Re}[\chi (t)]\frac{\sin (tz)}{t}dt}
,\quad z \ge 0,
\end{equation}

\noindent where $\chi(t)$ is a compound distribution characteristic
function calculated via the severity characteristic function
$\varphi (t)$ using (\ref{Paper_CalcCompDistr_CompCF_eq}). For
example, the explicit expression of $\mathrm{Re}[\chi (t)]$ for
$Poisson(\lambda )$ is

\begin{equation}
\label{Paper_CalcCompDistr_CompPoissonReCF_eq} \mathrm{Re}[\chi (t)]
= e^{ - \lambda }\exp (\lambda \mathrm{Re}[\varphi (t)])\times \cos
(\lambda \mathrm{Im}[\varphi (t)]).
\end{equation}

\noindent Hereafter, direct calculation of the distribution function
for annual loss $Z$ using
(\ref{Paper_CalcCompDistr_DirectIntergation_eq}) is referred to as
\emph{direct numerical integration} (DNI).

Much work has been done in the last few decades in the general area
of inverting characteristic functions numerically. Just to mention a
few, see the works by Bohman (1975)\nocite{Bohman75}; Seal
(1977)\nocite{Seal77}; Abate and Whitt (1992)\nocite{AbWh92},
(1995)\nocite{AbWh95}; Heckman and Meyers (1983)\nocite{HeMe83};
Shephard (1991)\nocite{Shephard91}; Waller et al
(1995)\nocite{WaTuHa95}; and Den Iseger (2006)\nocite{DeIs06}. These
papers address various issues such as singularity at the origin;
treatment of long tails in the infinite integration; and choices of
quadrature rules covering different objectives with different
distributions. Craddock et al (2000)\nocite{CrHePl00} gave an
extensive survey of numerical techniques for inverting
characteristic functions.

Each of the many existing techniques has particular strengths and
weaknesses, and no method works equally well for all classes of
problems. In an operational risk context, for instance, there is a
special need in computing the 0.999 quantile of the aggregate loss
distribution. The accuracy demanded is high and at the same time the
numerical inversion could be very time consuming due to rapid
oscillations and slow decay in the characteristic function. This is
the case, for example,  for heavy-tailed severities. Also, the
characteristic function of compound distributions should be
calculated numerically through semi-infinite integrations. A
tailor-made numerical algorithm to integrate
(\ref{Paper_CalcCompDistr_DirectIntergation_eq}) was presented in
Luo et al (2007)\nocite{LuShDo07} and Luo and Shevchenko
(2009)\nocite{LuSh09} with a specific requirement on accuracy and
efficiency in calculating high quantiles such as 0.999 quantile. The
method works well for both a wide range of frequencies from very low
to very high ($>10^5$) and heavy-tailed severities.

\subsection{Forward and Inverse Integrations}
The task of the characteristic function inversion is analytically
straightforward, but numerically difficult in terms of achieving
high accuracy and computational efficiency \emph{simultaneously}.

Accurate calculation of the high quantile as an inverse of the
distribution function requires high precision in evaluation of the
distribution function. To demonstrate, consider the lognormal
distribution $\mathcal{LN}(0,2)$. In this case, the ``exact'' 0.999
quantile $q_{0.999} = 483.2164\ldots$ . However, at $\alpha =
0.99902$, the quantile becomes $q_\alpha = 489.045\ldots\;$. That
is, a mere $0.002\% $ change in the distribution function value
causes more than $1\% $ change in the quantile value. In the case of
a compound distribution, the requirement for accuracy in the
distribution function could be even higher, because $1 / f(x)$ could
be larger at $x = q_{0.999} $. Note that, the error propagation from
the distribution function level to the quantile value is implied by
the relation between the density $f(x)$ and its distribution
function $F(x)$: $dF / dx = f(x)$.

The computation of compound distribution through the characteristic
function involves two steps: computing the characteristic function
(Fourier transform of the density function, referred to as the
\emph{forward integration}) and inverting it (referred to as the
\emph{inverse integration}).

\subsubsection{Forward Integration} This step requires integration
(\ref{Paper_CalcCompDistr_sevCF_eq}), that is, calculation of the
real and imaginary parts of the characteristic function for a
severity distribution:
\begin{equation}
\label{Paper_CalcCompDistr_FwdDirectIntegral_eq} \mathrm{Re}[\varphi
(t)]=\int\limits_0^\infty {f(x)\cos (\,tx)dx}, \quad
\mathrm{Im}[\varphi (t)]=\int\limits_0^\infty {f(x)\sin (tx)dx}.
\end{equation}
\noindent Then, the characteristic function of the compound loss is
calculated using (\ref{Paper_CalcCompDistr_CompCF_eq}).  These tasks
are relatively simple because the severity density typically has
closed-form expression, and is well-behaved having a single mode.

This step can be done more or less routinely and many existing
algorithms can be employed. The oscillatory nature of the integrand
only comes from the sin( ) or cos( ) functions. This well-behaved
weighted oscillatory integrand can be effectively dealt with by the
modified Clenshaw-Curtis integration method; see Clenshaw and Curtis
(1960)\nocite{ClCu60} and Piessens et al
(1983)\nocite{PiDoKaUbKa83}. In this method the oscillatory part of
the integrand is transferred to a weight function, the
non-oscillatory part is replaced by its expansion in terms of a
finite number of Chebyshev polynomials and the modified Chebyshev
moments are calculated. If the oscillation is slow when the argument
$t$ of the characteristic function is small, the standard
Guass-Legendre and Kronrod quadrature formulae are more effective;
see Kronrod (1965)\nocite{Kronrod65}, Golub and Welsh
(1969)\nocite{GoWe69}, Szeg\"{o} (1975)\nocite{Szego75}, and Section
\ref{Paper_CalcCompDist_GaussianQuad_sec}. In general, double
precision accuracy can be routinely achieved for the forward
integrations using standard adaptive integration functions commonly
available in many software packages.

\subsubsection{Inverse Integration} This step requires integration
(\ref{Paper_CalcCompDistr_DirectIntergation_eq}), which is much more
challenging task. Changing variable $x = t \times z$,
(\ref{Paper_CalcCompDistr_DirectIntergation_eq}) can be rewritten as

\begin{equation}
\label{Paper_CalcCompDistr_DirectIntegralNormalized} H(z)
=\int_0^\infty G(x,z)\sin (x)dx, \quad  G(x,z) = \frac{2}{\pi
}\frac{\mathrm{Re}[\chi (x / z)]}{x},
\end{equation}

\noindent where $\chi (t)$ depends on $\mathrm{Re}[\varphi (t)]$ and
$\mathrm{Im}[\varphi (t)]$ calculated from the forward semi-infinite
integrations (\ref{Paper_CalcCompDistr_FwdDirectIntegral_eq}) for
any required argument $t$. The total number of forward integrations
required by the inversion is usually quite large. This is because in
this case the characteristic function could be highly oscillatory
due to high frequency and it may decay very slowly due to heavy
tails. There are two oscillatory components in the integrand
represented by $\sin (x)$ and another part in $\mathrm{Re}[\chi (x /
z)]$. It is convenient to treat $\sin (x)$ as the principal
oscillatory factor and the other part as secondary. Typically, given
$z$, $\mathrm{Re}[\chi (x / z)]$ decays fast initially and then
approaches zero slowly as $x$ approaches infinity.

To calculate (\ref{Paper_CalcCompDistr_DirectIntegralNormalized}),
one could apply the same standard general purpose adaptive
integration routines as for the forward integration. However, this
is typically not efficient because it does not address irregular
oscillation specifically and can lead to an excessive number of
integrand evaluations. A simple approach that can be taken is to
divide the integration range of
(\ref{Paper_CalcCompDistr_DirectIntegralNormalized}) into intervals
of equal length $\pi$ (referred to as $\pi$-cycle) and truncate at
$2K\pi$:
\begin{equation}
\label{Paper_CalcCompDistr_DirectIntegralDiscretized_eq} H(z)
\approx \sum\limits_{k = 0}^{2K-1} {H_k } ,\quad H_k =
\int\limits_{k\pi }^{(k + 1)\pi } {G(x)\sin (x)dx}.
\end{equation}
\noindent Within each $\pi$-cycle, the secondary oscillation could
be dominating for some early cycles, thus the $\pi$-cycle could in
fact contain multiple cycles due to the ``secondary'' oscillation.
Thus a further sub-division is warranted. Sub-dividing interval
$\left(k\pi ,\;(k + 1)\pi \right)$ into $n_k $ segments of equal
length of $\Delta _k = \pi / n_k $,
(\ref{Paper_CalcCompDistr_DirectIntegralDiscretized_eq}) can be
written as
\begin{eqnarray}
\label{Paper_CalcCompDistr_DirectIntegralDiscretized_eq2} H_k =
\sum\limits_{j = 1}^{n_k } {H_k^{(j)} ,\quad H_k^{(j)} =
\int\limits_{a_{k,j}}^{b_{k,j} } {G(x)\sin (x)dx} },
\end{eqnarray}
\noindent where
$$
a_{k,j} = k\pi + (j - 1)\Delta _k,\quad b_{k,j} = a_{k,j} + \Delta
_k.
$$
 \noindent The above calculation will be most
effective if the sub-division is made adaptive for each $\pi $-cycle
according to the changing behaviour of $G(x)$. Assuming that for the
first $\pi $-cycle ($k = 0)$ we have initial partition $n_0$, Luo
and Shevchenko (2009)\nocite{LuSh09} recommends making $n_k $
adaptive for the subsequent cycles by the following two simple
rules:
\begin{itemize}
\item Let $n_k $ be proportional to the number of $\pi
$-cycles of the secondary oscillation -- the number of oscillations
in $G(x)$ within each principal $\pi $-cycle;

\item Let $n_k $ be proportional to the magnitude of the
maximum gradient of $G(x)$ within each principal $\pi $-cycle.
\end{itemize}
Application of these rules requires correct counting of secondary
cycles and good approximation of the local gradient in $G(x)$. Both
can be achieved with a significant number of points at which $G(x)$
is computed within each cycle using, for example, the $m$-point
Gaussian quadrature described in the next section.

\subsection{Gaussian Quadrature for
Subdivisions}\index{quadrature!Gaussian}
\label{Paper_CalcCompDist_GaussianQuad_sec} With a proper
sub-division, even a simple trapezoidal rule can be applied to get a
good approximation for integration over the sub-division $H_k^{(j)}
$ in (\ref{Paper_CalcCompDistr_DirectIntegralDiscretized_eq2}).
However, higher order numerical quadrature can achieve higher
accuracy for the same computing effort or it requires less computing
effort for the same accuracy. The $m$-point Gaussian quadrature
makes the computed integral exact for all polynomials of degree 2$m
- $1 or less. In particular:
\begin{equation}
\int_a^b g(x)dx \approx \frac{\Delta}{2}\sum_{i = 1}^m w_i g\left((a
+ b + \zeta_i \Delta ) / 2 \right),
\end{equation}
\noindent where $0 < w_i < 1$ and $ - 1 < \zeta _i < 1$ are the
$i^{th}$ weight\index{weight!Gaussain quadrature} and the $i^{th}$
abscissa of the Gaussian quadrature respectively, $\Delta=b-a$ and
$m$ is the order of the Gaussian quadrature.

Typically, even a simple 7-point Gaussian quadrature ($m = 7)$,
which calculates all polynomials of degree 13 or less exactly, can
successfully be used to calculate $H_k^{(j)}$ in
(\ref{Paper_CalcCompDistr_DirectIntegralDiscretized_eq},
\ref{Paper_CalcCompDistr_DirectIntegralDiscretized_eq2}). For
completeness, Table \ref{Paper_CalcCompDistr_tableGaussQuad}
presents 7-point Gaussian quadrature weights and abscissas; other
quadratures can be found in Piessens et al
(1983)\nocite{PiDoKaUbKa83}.

The efficiency of the Gaussian quadrature is much superior to the
trapezoidal rule. For instance, integrating the function $\sin (3x)$
over the interval $(0,\pi )$, the 7-point Gaussian quadrature has a
relative error less than $10^{ - 5}$, while the trapezoidal rule
requires about 900 function evaluations (grid spacing $\delta x =
\pi / 900)$ to achieve a similar accuracy. The reduction of the
number of integrand function evaluations is important for a fast
integration of
(\ref{Paper_CalcCompDistr_DirectIntegralDiscretized_eq}), because
the integrand itself is a time consuming semi-infinite numerical
integration.

The error of the $m$-point Gaussian quadrature rule can be
accurately estimated if the 2$m$ order derivative of the integrand
can be computed (Kahaner et al (1989)\nocite{KaMoNa89}; Stoer and
Bulirsch (2002)\nocite{StBu02}).
 In general, it is difficult to estimate the $2m$ order  derivative and the
actual error may be much less than a bound established by the
derivative. As it has already been mentioned, a common practice is
to use two numerical evaluations with the grid sizes different by
the factor of two and estimate the error as the difference between
the two results. Equivalently, different orders of quadrature can be
used to estimate error. Often, Guass-Kronrod
quadrature\index{quadrature!Guass-Kronrod} is used for this purpose.
Adaptive integration functions in many numerical software packages
use this estimate to achieve an overall error bound below the
user-specified tolerance.

\subsection{Tail Integration}
The truncation error of using
(\ref{Paper_CalcCompDistr_DirectIntegralDiscretized_eq}) is
\begin{equation}
\label{Paper_CalcCompDistr_TailIntegr_eq}
 H_T = \int\limits_{2K\pi }^\infty {G(x)\sin (x)dx}.
\end{equation}

\noindent For higher accuracy, instead of increasing truncation
length at the cost of computing time, one can try to calculate the
tail integration $H_T$ approximately or use tilting transform
(\ref{Paper_CalcCompDistr_tiltingTransform_eq}). Integration of
(\ref{Paper_CalcCompDistr_TailIntegr_eq}) by parts gives
\begin{eqnarray}
\label{Paper_CalcCompDistr_TailIntergByPartsTot} \int\limits_{2K\pi
}^\infty G(x)\sin (x)dx &=& G(2K\pi ) + \sum_{j=1
}^{k-1} (-1)^j G^{(2j)}(2K\pi)\nonumber\\
&&+(-1)^{k}\int_{2K\pi}^{\infty}G^{(2k)}(x)\sin(x)dx,
\end{eqnarray}

\noindent where $k\ge 1$, $G^{(2j)}(2K\pi )$ is the $2j$-th order
derivative of $G(x)$ at the truncation point. Under some conditions,
as $K\to\infty$,
\begin{equation*}
\int\limits_{2K\pi }^\infty G(x)\sin (x)dx \to G(2K\pi ) + \sum_{j=1
}^{\infty} (-1)^j G^{(2j)}(2K\pi).
\end{equation*}
For example, if we assume that for some $\gamma<0$,
$G^{(m)}(x)=O(x^{\gamma-m})$, $m=0,1,2,\ldots$ as $K\to\infty$, then
the series converges to the integral. However, this is not true for
some functions, such as $\exp(-x)$; typically in this case the
truncation error is not material. It appears that often, the very
first term in (\ref{Paper_CalcCompDistr_TailIntergByPartsTot}) gives
a very good approximation

\begin{equation}
\label{Paper_CalcCompDistr_TailIntergApprox} H_T= \int\limits_{2K\pi
}^\infty {G(x)\sin (x)dx}\approx G(2K\pi)
\end{equation}

\noindent for the tail integration or does not have a material
impact on the overall integration; see Luo and Shevchenko (2009,
2010)\nocite{LuSh09}\nocite{LuSh10}. This elegant result means that
we only need to evaluate the integrand at one single point $x = 2\pi
K$ for the entire tail integration. Thus the total integral
approximation
(\ref{Paper_CalcCompDistr_DirectIntegralDiscretized_eq}) can be
improved by including tail correction giving

\begin{equation}
\label{Paper_CalcCompDistr_DirectIntegralTotApprox_eq} H(z) \approx
\sum\limits_{k = 0}^{2K - 1} H_k +G(2N\pi).
\end{equation}

\begin{remark} The approximation
(\ref{Paper_CalcCompDistr_TailIntergApprox}) can be improved by
including further terms if derivatives are easy to calculate, e.g.
$H_T \approx G(2K\pi)-G^{(2)}(2K\pi)$.  If the oscillating factor is
$\cos (x)$ instead of $\sin (x)$, one can still derive a one-point
formula similar to (\ref{Paper_CalcCompDistr_TailIntergByPartsTot})
by starting the tail integration at $(2K - 1 / 2)\pi $ instead of
$2K\pi$.
\end{remark}

Of course there are more elaborate methods to treat the truncation
error which are superior to a simple approximation
(\ref{Paper_CalcCompDistr_TailIntergApprox}) in terms of better
accuracy and broader applicability, such as some of the
extrapolation methods proposed in Wynn (1956)\nocite{Wynn56}, Sidi
(1980)\nocite{Sidi80} and Sidi (1988)\nocite{Sidi88}.

\subsection{Error Sources and Numerical Example} Table
\ref{Paper_CalcCompDistr_DNIconvergence_table} shows the convergence
of DNI results (seven digits), for truncation lengths $2\le K \le
80$ in the cases of tail correction included and ignored. One can
see a material improvement from the tail correction. Also, as the
truncation length increases, both estimators with the tail
correction and without converge. In this particular case we
calculate compound distribution $Poisson(100)$-$\mathcal{LN}(0,2)$
at the level $z=5853.1$. The latter is the value that corresponds to
the 0.999 quantile (within 1st decimal place) of this distribution
as has already been calculated by Panjer recursion; see Table
\ref{Paper_CalcCompDistr_PanjerRecursionConvergence_table}. Of
course, to calculate the quantile at the 0.999 level using DNI, a
search algorithm such as bisection should be used that will require
evaluation of distribution function many times (of the order of 10)
increasing computing time. Comparing this with Tables
\ref{Paper_CalcCompDistr_PanjerRecursionConvergence_table} and
\ref{Paper_CalcCompDistr_FFTresults_table}, one can see that for
this case DNI is faster than Panjer recursion while slower than FFT
(with tilting) by a factor of 10.

The final result of the inverse integration has three error sources:
the discretisation error of the Gauss quadrature; the error from the
tail approximation; and the error propagated from the error of the
forward integration. These were analysed in Luo and Shevchenko
(2009)\nocite{LuSh09}. It was shown that the propagation error is
proportional to the forward integration error bound. At the extreme
case of $\lambda = 10^6$, a single precision can still be readily
achieved if the forward integration has a double precision. For very
large $\lambda$, the propagation error is likely the largest among
the three error sources. Though some analytic formulas for error
bounds are available, these are not very useful in practise because
high order derivatives are involved, which is typical for analytical
error bounds. An established and satisfactory practice is to use
finer grids to estimate the error of the coarse grids.

\section{Comparison of Numerical Methods}
\label{Paper_CompDistr_method_comparison_sec} For comparison
purposes, Tables
\ref{Paper_CalcCompDistr_MethodComparison_PoisLN_table} and
\ref{Paper_CalcCompDistr_MethodComparison_PoisGPD_table} present
results for the 0.999 quantile of compound distributions
$Poisson(\lambda )$-$\mathcal{LN}(0,2)$ and
$Poisson(\lambda)$-${GPD}(1,1)$ (with $\lambda=0.1,10,10^3$),
calculated by the DNI, FFT, Panjer and MC methods. Note that, with
the shape parameter $\xi = 1$,  $GPD(\xi,\beta)$ has infinite mean
and all higher moments. For DNI, FFT and Panjer recursion methods,
the results, accurate up to 5 significant digits, were obtained as
follows:

\begin{itemize}
\item For DNI algorithm we start with a relatively coarse grid ($n_0=1$) and short
truncation length $K=25$, and keep halving the grid size and
doubling the truncation length until the difference in the 0.999
quantile is within required accuracy. The DNI algorithm computes
distribution function, $H(z)$, for any given level $z$ by
(\ref{Paper_CalcCompDistr_DirectIntergation_eq}), one point at a
time. Thus with DNI we have to resort to an iterative procedure to
inverse (\ref{Paper_CalcCompDistr_DirectIntergation_eq}). This
requires evaluating (\ref{Paper_CalcCompDistr_DirectIntergation_eq})
many times depending on the search algorithm employed and the
initial guess. Here, a standard bisection algorithm is employed.
Other methods (MC, Panjer recursion and FFT) have the advantage that
they obtain the whole distribution in a single run.

\item For Panjer recursion, starting with a large step (e.g.
$\delta=8$) the step $\delta$ is successively reduced until the
change in the result is smaller than the required accuracy.

\item For FFT with tilting, the same step $\delta$ is used as the one in the Panjer recursion.
If we would not know the Panjer recursion results, then we would
successively reduce the step $\delta$ (starting with some large
step) until the change in the result is smaller than the required
accuracy.  The truncation length $M=2^r$ has to be large enough so
that $\delta M
> \widehat {Q}_q$ is satisfied. We use the smallest possible integer $r$ that allows to identify the quantile, typically such that $\delta M
\approx 2\widehat {Q}_q$. Here, $\widehat {Q}_q $ is the quantile to
be computed, which is not known a \emph{priori} and some extra
iteration is typically required. Also, the tilting parameter is set
to $\theta=20/M$.

\item For the MC estimates, the number of simulations, $N_{MC}$
(denoted by $K$ in Section \ref{Paper_CalcCompDistr_MC_section}),
ranges from $10^6$ to $10^{8}$, so that calculations are
accomplished within $\approx 10$ min. The error of the MC estimate
is approximately proportional to $1/\sqrt{N_{MC}}$ and the
calculation time is approximately proportional to $N_{MC}$. Thus the
obtained results allow to judge how many simulations (time) is
required to achieve a specific accuracy.
\end{itemize}

The agreement between FFT, Panjer recursion and DNI estimates is
perfect. Also, the difference between these results and
corresponding MC estimates is always within the two MC standard
errors. However, the CPU time is very different across the methods:

\begin{itemize}
\item The quoted CPU time for the MC results is of the order 10 min. However,
it is clear from the standard error results (recalling that the
error is proportional to $1/\sqrt{N_{MC}}$) that the CPU time,
required to get the results accurate up to five significant digits,
would be of the order of several days. Thus MC is the slowest
method.

\item Typically, the CPU time for both Panjer recursion and FFT increase as
$\lambda$ increases, while CPU time for DNI does not change
significantly.

\item FFT is the fastest method, though at very high frequency $\lambda =
10^3$, DNI performance is of a similar order. As reported in Luo and
Shevchenko (2009)\nocite{LuSh09}, DNI becomes faster than FFT for
higher frequencies $\lambda > 10^3$.

\item Panjer recursion is always slower than FFT. It is faster than
DNI for small frequencies and much slower for high frequencies.

\end{itemize}

Finally note that, the FFT, Panjer recursion and DNI results were
obtained by successive reduction of grid size (starting with a
coarse grid) until the required accuracy is achieved. The quoted CPU
time is for the last iteration in this procedure. Thus the results
for CPU time should be treated as indicative only. For comparison of
FFT and Panjer, also see Embrechts and Frei (2009)\nocite{EmFr08},
and B\"{u}hlmann (1984)\nocite{Buehlmann84}.

\section{Closed-Form Approximation}
\label{Paper_CompDistr_ClosedFormApprox_sec} There are several
well-known approximations for the compound loss distribution. These
can be used with different success depending on the quantity to be
calculated and distribution types. Even if the accuracy is not good,
these approximations are certainly useful from the methodological
point of view in helping to understand the model properties. Also,
the quantile estimate derived from these approximations can
successfully be used to set a cut-off level for FFT algorithms that
will subsequently determine the quantile more precisely.

\subsection{Normal and Translated Gamma Approximations}
Many parametric distributions can be used as an approximation for a
compound loss distribution by moment matching. This is because the
moments of the compound loss can be calculated in closed-form. In
particular, the first four moments are given in Proposition
\ref{CompDistrFourMoments_proposition}. Of course these can only be
used if the required moments exist which is not the case for some
heavy-tailed risks with infinite moments. Below we mention normal
and translated gamma approximations, discussed e.g. in McNeil et al
(2005, Section 10.2.3)\nocite{McFrEm05}.

\subsubsection{Normal Approximation} As the severities
$X_1,X_2,\ldots\;$ are independent and identically distributed, at
very high frequencies the central limit theory is expected to
provide a good approximation to the distribution of the annual loss
$Z$ (if the second moment of severities is finite). Then the
compound distribution is approximated by the normal distribution
with the mean and variance given in Proposition
\ref{CompDistrFourMoments_proposition}, that is,
\begin{eqnarray}
\label{CompDistr_NormalApproximation} H(z)\approx
\mathcal{N}(\mathrm{E}[Z],\sqrt{\mathrm{Var}[Z]}).
\end{eqnarray}
\noindent This is an asymptotic result and a priori we do not know
how well it will perform for a specific distribution types and
distribution parameter values. Also, it cannot be used for the cases
where variance or mean are infinite.

\begin{example} If $N$ is distributed from $Poisson(\lambda )$ and $X_1,\ldots,X_N$ are independent random variables
from $\mathcal{LN}(\mu ,\sigma )$, then
\begin{equation}
\mathrm{E}[Z] = \lambda \exp (\mu + 0.5\sigma ^2),\quad
\mathrm{Var}[Z] = \lambda \exp (2\mu + 2\sigma ^2).
\end{equation}
\end{example}

\subsubsection{Translated Gamma Approximation} From
(\ref{Paper_CalcCompDistr_PoissonCompDistrThreeMoments_eq}), the
skewness of the compound distribution, in the case of Poisson
distributed frequencies, is

\begin{equation}
\frac{\mathrm{E}[(Z-\mathrm{E}[Z])^3]}{\left(\mathrm{Var}[Z]\right)^{3/2}}=
\frac{\lambda\mathrm{E}[X^3]}{\left(\lambda\mathrm{E}[X^2]\right)^{3/2}}>0,
\end{equation}

\noindent that approaches zero as $\lambda$ increases but finite
positive for finite $\lambda>0$. To improve the normal approximation
(\ref{CompDistr_NormalApproximation}), the compound loss can be
approximated by the shifted gamma distribution which has a positive
skewness, that is, $Z$ is approximated as $Y+a$ where $a$ is a shift
and $Y$ is a random variable from $Gamma(\alpha,\beta)$. The three
parameters are estimated by matching the mean, variance and skewness
of the approximate distribution and the correct one:
\begin{equation}
\label{TranslatedGamma_CompDistrApproximation_eq}
a+\alpha\beta=\mathrm{E}[Z];\quad
\alpha\beta^2=\mathrm{Var}[Z];\quad
\frac{2}{\sqrt{\alpha}}=\mathrm{E}[(Z-\mathrm{E}[Z])^3]/\left(\mathrm{Var}[Z]\right)^{3/2}.
\end{equation}
\noindent This approximation requires the existence of the first
three moments and thus cannot be used if the third moment does not
exist.
\begin{example} If frequencies are Poisson distributed, $N\sim
Poisson(\lambda)$, then
\begin{equation}
a+\alpha\beta=\lambda \mathrm{E}[X];\quad \alpha\beta^2=\lambda
\mathrm{E}[X^2];\quad
\frac{2}{\sqrt{\alpha}}=\lambda\mathrm{E}[X^3]/\left(\lambda\mathrm{E}[X^2]\right)^{3/2}.
\end{equation}
\end{example}

\subsection{VaR Closed-Form Approximation} If severities $X_1,\ldots,X_N$
are independent and identically distributed from the sub-exponential
(heavy tail) distribution $F(x)$, and frequency distribution
satisfies
$$ \sum_{n=0}^{\infty}(1+\epsilon)^n \Pr[N=n]<\infty$$
for some $\epsilon>0$, then the tail of the compound distribution
$H(z)$, of the compound loss $Z=X_1+\cdots+X_N$,   is related to the
severity tail as
\begin{equation}
\label{Paper_CalcCompDistr_ExtremeTailApprox_eq}
 1 - H(z)\to \mathrm{E}[N](1 - F(z)),\quad\mbox{as}\quad z \to \infty
 ;
\end{equation}
\noindent see  Theorem 1.3.9 in Embrechts et al
(1997)\nocite{EmKlMi97}. The validity of this asymptotic result was
demonstrated for the cases when $N$ is distributed from Poisson,
binomial or negative binomial\index{negative binomial distribution}.
It can be used to find the quantile of the annual loss
\begin{equation}
\label{Paper_CalcCompDistr_VaRtailApprox_eq} \mathrm{VaR}_\alpha
[Z]\to F^{ - 1} \left( {1 - \frac{1 - \alpha}{\mathrm{E}[N]}}
\right),\quad\mbox{as}\quad \alpha \to 1.
\end{equation}
For application in the operational risk context, see B\"{o}cker and
Kl\"{u}ppelberg (2005)\nocite{BoKl05}. Under the assumption that the
severity has a finite mean, B\"{o}cker and Sprittulla
(2006)\nocite{BoSp06} derived a correction reducing the
approximation error of (\ref{Paper_CalcCompDistr_VaRtailApprox_eq}).

\begin{example}
Consider a  heavy-tailed compound distribution $Poisson(\lambda
)$-${GPD}(\xi,\beta)$. In this case,
(\ref{Paper_CalcCompDistr_VaRtailApprox_eq}) gives
\begin{equation}
\label{Paper_CalcCompDistr_VaRapproxPoissonPareto_eq}
\mathrm{VaR}_\alpha [Z] \to \frac{\beta }{\xi }\left( {\frac{\lambda
}{1 - \alpha}} \right)^\xi ,\quad\mbox{as}\quad \alpha \to 1.
\end{equation}
\noindent This implies a simple scaling, $\mathrm{VaR}_\alpha [Z]
\propto \lambda ^\xi$, with respect to the event intensity $\lambda$
for large $\alpha$.
\end{example}

\begin{example}
\label{Comparison_CompDistrApprox_example} To demonstrate the
accuracy the above approximations, consider compound distribution
$Poisson(\lambda=100)$-$\mathcal{LN}(\mu=0,\sigma=2)$ with
relatively heavy tail severity. Calculating moments of the lognormal
distribution $\mathrm{E}[X^m]$ using (\ref{Lognormal_Moments}) and
substituting into
(\ref{Paper_CalcCompDistr_PoissonCompDistrThreeMoments_eq}) gives
\begin{eqnarray*}&&\mathrm{E}[Z]\approx 738.9056, \quad \mathrm{Var}[Z]\approx
298095.7987,\\
&&\mathrm{E}[(Z-\mathrm{E}[Z])^3]/(\mathrm{Var}[Z])^{3/2}\approx
40.3428.
\end{eqnarray*}
\noindent Approximating the compound distribution by the normal
distribution with these mean and variance gives normal
approximation. Approximating the compound distribution by the
translated gamma distribution
(\ref{TranslatedGamma_CompDistrApproximation_eq}) with these mean,
variance and skewness gives: $\alpha\approx 0.002457$, $\beta\approx
11013.2329$, $a\approx 711.8385$. Figure
\ref{Paper_CalcCompDistr_CompDistrApprox_fig}a shows the normal and
translated gamma approximations for the tail of the compound
distribution. These are compared with the asymptotic result for
heavy tail distributions
(\ref{Paper_CalcCompDistr_ExtremeTailApprox_eq}) and ``\emph{exact}"
values obtained by FFT. It is easy to see that the heavy tail
asymptotic approximation
(\ref{Paper_CalcCompDistr_ExtremeTailApprox_eq}) converges to the
``\emph{exact}" result for large quantile level $\alpha\to 1$, while
the normal and gamma approximations perform badly. The results for
the case of not so heavy tail, when the severity distribution is
$\mathcal{LN}(0,1)$, are shown in Figure
\ref{Paper_CalcCompDistr_CompDistrApprox_fig}b. Here, the gamma
approximation outperforms normal approximation and heavy tail
approximation is very bad. The accuracy of the heavy tail
approximation (\ref{Paper_CalcCompDistr_ExtremeTailApprox_eq})
improves for more heavy-tailed distributions, such as GPD with
infinite variance or even infinite mean.

\end{example}

\section{Conclusions}
In this paper we reviewed methods that can be used to calculate the
distribution of the compound loss. Overall, FFT with tilting is
typically the fastest method though it involves tuning of the
cut-off level, tilting parameter and discretisation step. The
easiest to implement is Panjer recursion that involves
discretisation error only. DNI method is certainly competitive with
FFT and Panjer for large frequencies, though its implementation can
be quite involved. Monte Carlo method is slow but simple in
implementation and it can easily handle multiple risks with
dependence. The latter is problematic for FFT and Panjer recursion
methods. In general, each of the reviewed techniques has particular
strengths and weaknesses that a modeller should be aware of. The
choice of the method is dictated by the specific objectives to be
achieved.

\appendix
\section{List of Distributions}
\label{ListOfDiscreteDistributions_appendix}

\noindent\textbf{Poisson distribution, $Poisson(\lambda)$.} A
Poisson distribution function is denoted as $Poisson(\lambda)$. The
random variable $N$ has a Poisson distribution, denoted $N\sim
Poisson(\lambda)$, if its probability mass function is
\begin{equation}
p_k=\Pr[N=k]=\frac{\lambda^k}{k!}e^{-\lambda}, \;
\lambda>0,\;k\in\{0,1,2,\ldots\}.
\end{equation}
Expectation, variance and variational coefficient  are
\begin{equation}
\mathrm{E}[N]=\lambda, \; \mathrm{Var}[N]=\lambda, \;
\mathrm{Vco}[N]=\frac{1}{\sqrt{\lambda}}.
\end{equation}

\noindent\textbf{Binomial distribution, $Bin(n,p)$.} A binomial
distribution function is denoted as $Bin(n,p)$. The random variable
$N$ has a binomial distribution, denoted $N\sim Bin(n,p)$, if its
probability mass function is
\begin{equation}
p_k=\Pr[N=k]= \left(\begin{array}{c}
n \\
k
\end{array}\right)
p^k(1-p)^{n-k},\;p\in (0,1), \; n \in {1,2,\ldots}
\end{equation}
for all $k\in\{0,1,\ldots,n\}$. Expectation, variance and
variational coefficient   are
\begin{equation}
\mathrm{E}[N]=np,\;\mathrm{Var}[N]=np(1-p),\;
\mathrm{Vco}[N]=\sqrt{\frac{1-p}{np}}.
\end{equation}

\noindent\textbf{Negative binomial distribution, $NegBin(r,p)$.} A
negative binomial distribution\index{negative binomial distribution}
function is denoted as $NegBin(r,p)$. The random variable $N$ has a
negative binomial distribution, denoted $N\sim NegBin(r,p)$, if its
probability mass function is
\begin{equation}
p_k=\Pr[N=k]= \frac{\mathrm{\Gamma}(k+r)}{k!\mathrm{\Gamma}(r)}
p^r(1-p)^{k},\;p\in (0,1),\;r \in (0,\infty)
\end{equation}
for all $k\in\{0,1,2,\ldots\}$. Here, $\mathrm{\Gamma}(r)$ is the
gamma function. Expectation, variance and variational coefficient
are

\begin{equation}
\mathrm{E}[N]=\frac{r(1-p)}{p},\mbox{
}\mathrm{Var}[N]=\frac{r(1-p)}{p^2},\;
\mathrm{Vco}[N]=\frac{1}{\sqrt{r(1-p)}}.
\end{equation}

\noindent\textbf{Normal distribution,
${\mathcal{N}}(\mu,\sigma)$.}\index{normal distribution} A normal
(Gaussian) distribution function is denoted as
${\mathcal{N}}(\mu,\sigma)$. The random variable $X$ has a normal
distribution, denoted $X\sim {\mathcal{N}}(\mu,\sigma)$, if its
probability density function is

\begin{equation}
f(x)=\frac{1}{\sqrt{2\pi
\sigma^2}}\mathrm{exp}\left(-\frac{(x-\mu)^2}{2\sigma^2}\right),
\;\sigma^2>0,\;\mu \in \mathbb{R}
\end{equation}
\noindent for all $x\in \mathbb{R}$. Expectation, variance and
variational coefficient are

\begin{equation}
\mathrm{E}[X]=\mu,\; \mathrm{Var}[X]=\sigma^2,\;
\mathrm{Vco}[X]=\sigma/\mu.
\end{equation}

\noindent\textbf{Lognormal distribution, ${\mathcal
LN}(\mu,\sigma)$}\index{lognormal distribution.} A lognormal
distribution function is denoted as ${\mathcal LN}(\mu,\sigma)$. A
random variable $X$ has a lognormal distribution, denoted $X\sim
{\mathcal{LN}}(\mu,\sigma)$,  if its probability density function is

\begin{equation}
f(x)=\frac{1}{x\sqrt{2\pi
\sigma^2}}\mathrm{exp}\left(-\frac{(\ln(x)-\mu)^2}{2\sigma^2}\right),
\;\sigma^2>0,\;\mu \in \mathbb{R}
\end{equation}
\noindent for $x>0$. Expectation, variance and variational
coefficient are

\begin{equation}
\mathrm{E}[X]=e^{\mu+\frac{1}{2}\sigma^2},\;
\mathrm{Var}[X]=e^{2\mu+\sigma^2}(e^{\sigma^2}-1),\;
\mathrm{Vco}[X]=\sqrt{e^{\sigma^2}-1}.
\end{equation}

\noindent\textbf{Gamma distribution,
$Gamma(\alpha,\beta)$.}\index{gamma distribution} A gamma
distribution function is denoted as $Gamma(\alpha,\beta)$. The
random variable $X$ has a gamma distribution, denoted as $X\sim
Gamma(\alpha,\beta)$, if its probability density function is

\begin{equation}
    f(x)=\frac{x^{\alpha-1}}{\mathrm{\Gamma}(\alpha)\beta^{\alpha}}\exp(-x/\beta),\;\alpha>0,\;\beta>0
\end{equation}
\noindent for $x>0$. Expectation, variance and variational
coefficient are

\begin{equation}
\mathrm{E}[X]=\alpha\beta,\; \mathrm{Var}[X]=\alpha\beta^2,\;
\mathrm{Vco}[X]=1/\sqrt{\alpha}.
\end{equation}

\noindent\textbf{Generalised Pareto distribution,
$GPD(\xi,\beta)$.}\index{generalised Pareto distribution} The GPD
distribution function is denoted as $GPD(\xi,\beta)$. The random
variable $X$ has GPD distribution, denoted as $X\sim
GPD(\xi,\beta)$, if its distribution function is
\begin{equation}
\label{Chapter_Appendix_GPD} H_{\xi ,\beta } (x) = \left\{
\begin{array}{ll}
 1 - (1 + \xi x / \beta )^{ - 1 / \xi },\quad& \xi \ne 0,\\
1 - \exp ( - x / \beta ),\quad & \xi = 0,\\
\end{array} \right.
\end{equation}

\noindent where $x \ge 0$ when $\xi \ge 0$ and $0 \le x \le - \beta
/ \xi $ when $\xi < 0$. The moments of $X\sim GPD(\xi,\beta)$,
$\xi\ge 0$, can be calculated using
\begin{eqnarray}
\mathrm{E}[X^n]&=&\frac{\beta^n
n!}{\prod_{k=1}^{n}(1-k\xi)},\;\xi<\frac{1}{n}.
\end{eqnarray}


\small{
\bibliography{bibliography}
\bibliographystyle{my_bibstyle}
}

\newpage

\begin{table}[htbp]
\caption{Panjer recursion starting values $h_0$ and $(a,b)$
parameters for Poisson, binomial and negative binomial
distributions.}
\begin{tabular*}{1.0\textwidth}
{@{\extracolsep{\fill}}@{\hspace{0.01\textwidth}}llll@{\hspace{0.01\textwidth}}}
\toprule
 &  $a$ & $b$ & $h_0$ \\
\midrule $Poisson(\lambda)$ & $0$ & $\lambda$  & $\exp(\lambda
(f_0-1))$ \\ \\
$NegBin(r,q)$ & $1-q$ & $(1-q)(r-1)$  &
$\left(1+(1-f_0)\frac{1-q}{q}\right)^{-r}$  \\ \\
$Bin(m,q)$ & $-\frac{q}{1-q}$& $\frac{q(m+1)}{1-q}$& $(1+q(f_0-1))^{m}$\\
 \bottomrule
\end{tabular*}
\label{Paper_CalcCompDistr_PanjerRecursionParameters_table}
\end{table}

\vspace{2cm}

\begin{table}[htbp]
\caption{Example of Panjer recursion calculating the
$Poisson(100)-\mathcal{LN}(0,2)$ compound distributions using
central difference discretisation with the step $\delta=1$.}
\begin{tabular*}{1.0\textwidth}
{@{\extracolsep{\fill}}@{\hspace{0.01\textwidth}}cccc@{\hspace{0.01\textwidth}}}
\toprule
$n$ &  $f_n$ & $h_n$ & $H_n$ \\
\midrule
$0$ & $0.364455845 $ & $2.50419\times 10^{-28} $  & $2.50419\times 10^{-28}$   \\
$1$ & $0.215872117 $ & $5.40586\times 10^{-27} $  & $5.65628\times 10^{-27}$  \\
$2$ & $0.096248034 $ & $6.07589\times 10^{-26} $  & $6.64152\times 10^{-26}$ \\
$\vdots$ & $\vdots$& $\vdots$& $\vdots$\\
$5847$ & $2.81060\times 10^{-9}$ & $4.44337\times 10^{-7}$ & $0.998999329 $ \\
$5848$ & $2.80907\times 10^{-9}$ & $4.44061\times 10^{-7}$ & $0.998999773$\\
$5849$ & $2.80755\times 10^{-9}$ & $4.43785\times 10^{-7}$ & $0.999000217$\\
\bottomrule
\end{tabular*}
\label{Paper_CalcCompDistr_PanjerRecursionExample1_table}
\end{table}

\begin{table}[htbp]
\caption{Convergence of Panjer recursion estimate,
$\widehat{q}_{0.999}$, of the 0.999 quantile for the
$Poisson(100)-\mathcal{LN}(0,2)$ compound distributions using
central difference discretisation vs the step size $\delta$. Here,
$N=\widehat{q}_{0.999}/\delta$ is the number of steps required.}
\begin{tabular*}{1.0\textwidth}
{@{\extracolsep{\fill}}@{\hspace{0.01\textwidth}}llll@{\hspace{0.01\textwidth}}}
\toprule
$\delta$ &  $N$ & $\widehat{q}_{0.999}$ & time(sec) \\
\midrule $16$ & $360$ & $5760$  & $0.19$ \\
$8$ & $725$ & $5800$  & $0.20$  \\
$4$ & $1457$& $5828$& $0.28$\\
$2$ & $2921$ & $5842$ & $0.55$ \\
$1$ & $5849$ & $5849$ & $1.59$\\
$0.5$ & $11703$& $5851.5$& $5.77$\\
$0.25$ &$23411$ & $5852.75$ & $22.47$\\
$0.125$ &$46824$ & $5853$ & $89.14$\\
$0.0625$ & $93649$& $5853.0625$& $357.03$\\
 \bottomrule
\end{tabular*}
\label{Paper_CalcCompDistr_PanjerRecursionConvergence_table}
\end{table}

\begin{table}[htbp]
\caption{Example of Panjer recursion calculating the
$Poisson(100)-\mathcal{LN}(0,2)$ compound distributions using
central, forward and backward difference discretisation with the
step $\delta=1$.}
\begin{tabular*}{1.0\textwidth}
{@{\extracolsep{\fill}}@{\hspace{0.01\textwidth}}cccc@{\hspace{0.01\textwidth}}}
\toprule
$n$ &  $H_n^L$ & $H_n$ & $H_n^U$ \\
\midrule $0$ & $3.72008\times 10^{-44} $ & $2.50419\times 10^{-28}$  & $1.92875\times 10^{-22}$ \\
$1$ & $1.89724\times 10^{-42} $ & $5.65628\times 10^{-27}$  & $2.80718\times 10^{-21}$  \\
$\vdots$ & $\vdots$& $\vdots$& $\vdots$\\
$5811$ & $0.998953196$ & $0.998983158$ & $0.998999719$ \\
$5812$ & $0.998953669$ & $0.998983612$ & $0.999000163$\\
$\vdots$ & $\vdots$& $\vdots$& $\vdots$\\
$5848$ &$0.9989705$ & $0.998999773$ & $0.999015958$\\
$5849$ &$0.998970962$ & $0.999000217$ & $0.999016392$\\
$\vdots$ & $\vdots$& $\vdots$& $\vdots$\\
$5913$ & $0.998999942$& $0.999028056$ & $0.999043605$\\
$5914$ & $0.999000385$ & $0.999028482$ & $0.999044022$\\
 \bottomrule
\end{tabular*}
\label{Paper_CalcCompDistr_PanjerRecursionExample2_table}
\end{table}

\begin{table}[htbp]
\caption{Example of FFT calculating the 0.999 quantile of the
$Poisson(100)-\mathcal{LN}(0,2)$ compound distribution using central
difference discretisation with the step $\delta=0.5$. The exact
Panjer recursion for this discretisation step gives
$Q_{0.999}=5851.5$.}
\begin{tabular*}{1.0\textwidth}
{@{\extracolsep{\fill}}@{\hspace{0.01\textwidth}}llllll@{\hspace{0.01\textwidth}}}
\toprule $r$ &  $L=\delta\times 2^r$ & $Q^{(1)}_{0.999}$ &
$Q^{(2)}_{0.999}$ & $Q^{(tilt)}_{0.999}$ &
 time (sec) \\
\midrule
$14$ & $8192$ & $5117$  & $5665.5$ & $5851.5$ & $ 0.17$  \\
$15$ & $16384$ & $5703.5$ & $5834$ & $5851.5$ & $0.36$ \\
$16$ & $32768$& $5828$  & $5850$ & $5851.5$ & $0.75$\\
$17$ & $65536$ & $5848.5$ & $5851.5$ & $5851.5$ & $1.61$\\
$18$ & $131072$ & $5851.5$ & $5851.5$ & $5851.5$ & $3.64$\\
$19$ & $262144$& $5851.5$& $5851.5$ & $5851.5$ & $7.61$\\
 \bottomrule
\end{tabular*}
\label{Paper_CalcCompDistr_FFTresults_table}
\end{table}

\begin{table}[htbp]
\caption{The weights $w_i$ and abscissas $\zeta_i$ of the 7-point
Gaussian quadrature}
\begin{tabular}{@{\hspace{0.05\textwidth}}p{0.05\textwidth}@{\hspace{0.05\textwidth}}
p{0.4\textwidth}@{\hspace{0.05\textwidth}} p{0.4\textwidth}@{}}
\toprule
$i$ &  \quad\quad$\zeta_i$  & \quad\quad$w_i$ \\
\midrule
$1$ & -0.949107912342759  & 0.129484966168870  \\
$2$ & -0.741531185599394  & 0.279705391489277  \\
$3$ & -0.405845151377397  & 0.381830050505119   \\
$4$ & 0.0  &  0.417959183673469   \\
$5$ & 0.405845151377397  &  0.381830050505119   \\
$6$ & 0.741531185599394  &  0.279705391489277  \\
$7$ & 0.949107912342759  &  0.129484966168870   \\
\bottomrule
\end{tabular}
\label{Paper_CalcCompDistr_tableGaussQuad}
\end{table}

\begin{table}[htbp]
\caption{Convergence in DNI estimates of $H(z=5853.1)$ for
$Poisson(100)$-$\mathcal{LN}(0,2)$ in the case of $n_0=1$ and
different truncation length $K$. $\widehat{H}_{tail}$ is the
estimate with the tail correction and $\widehat{H}$ is the estimate
without the tail correction.}
\label{Paper_CalcCompDistr_DNIconvergence_table}
\begin{tabular*}
{1.0\textwidth}
{@{\extracolsep{\fill}}@{\hspace{0.01\textwidth}}llll@{\hspace{0.01\textwidth}}}
\toprule
$K$ & $\widehat{H}$ & $\widehat{H}_{tail}$ & time(sec) \\
\midrule
2& 0.9938318 & 0.9999174 & 0.0625 \\
3 & 1.0093983 & 0.9993260 & 0.094 \\
4 & 1.0110203 & 0.9991075 & 0.125 \\
5 & 1.0080086 & 0.9990135 & 0.141 \\
10& 0.9980471 & 0.9989910 & 0.297 \\
20& 0.9990605 & 0.9990002 & 0.578 \\
40& 0.9989996 & 0.9990000 & 1.109 \\
80& 0.9990000 & 0.9990000 & 2.156 \\
\bottomrule
\end{tabular*}

\end{table}

\begin{table}[htbp]
\caption{The estimates of the 0.999 quantile, ${Q}_{0.999}$, for
$Poisson(\lambda)$-$\mathcal{LN}(0,2)$, calculated using DNI, FFT,
Panjer recursion and MC methods. Standard errors of MC estimates are
given in brackets next to the estimator.}
\label{Paper_CalcCompDistr_MethodComparison_PoisLN_table}
\begin{tabular*}
{1.0\textwidth}
{@{\extracolsep{\fill}}@{\hspace{0.01\textwidth}}lllll@{\hspace{0.01\textwidth}}}
\toprule & $\lambda $ & 0.1 & 10 & 1000 \\
\midrule
DNI & $Q_{0.999}$      & $105.36$       & $1,779.1$          & $21,149$  \\
    & time             &  15.6s         &   6s               & 25s        \\
    & $K\backslash n_0$&  $50\backslash 2$& $25\backslash 1$ & $25\backslash 1$\\
    \cmidrule{2-5}
MC & $Q_{0.999}$       & $105.45(0.26)$ & $1,777(9)$      & $21,094(185)$ \\
   & time              &    3min        & 3.9min          &11.7min     \\
   &$N_{MC}$           & $10^8$         & $10^7$          &$10^6$     \\
   \cmidrule{2-5}
Panjer & $Q_{0.999}$   & $105.36$       & $1,779.1$          & $21,149$ \\
       & time          & 7.6s           &  8.5s              & 3.6h  \\
       & $h$           &$2^{-7}$        & $2^{-3}$           &  $2^{-4}$  \\
\cmidrule{2-5}
FFT &$Q_{0.999}$       & $105.36$       &   $1,779.1$         & $21,149$ \\
    & time             & 0.17s          &   0.19s            & 7.9s \\
    & $h$              &$2^{-7}$        & $2^{-3}$           & $2^{-4}$      \\
    & $M$        &$2^{14}$        & $2^{14}$           & $2^{19}$ \\
\bottomrule
\end{tabular*}
\end{table}

\begin{table}[htbp]
\caption{The estimates of the 0.999 quantile, ${Q}_{0.999}$, for
$Poisson(\lambda )$-$GPD(1,1)$, calculated using DNI, FFT, Panjer
recursion and MC methods. Standard errors of MC estimates are given
in brackets next to the estimator.}
\begin{tabular*}
{1.0\textwidth}
{@{\extracolsep{\fill}}@{\hspace{0.01\textwidth}}lllll@{\hspace{0.01\textwidth}}}
\toprule & $\lambda $ & 0.1 & 10 & 1000 \\
\midrule
DNI & $Q_{0.999}$      & $99.352$     & $10,081$            & $1.0128\times 10^6$  \\
    & time             &  21s         &   29s               & 52s        \\
    & $K\backslash n_0$&  $100\backslash 2$& $100\backslash 2$ & $100\backslash 1$\\
    \cmidrule{2-5}
MC & $Q_{0.999}$       & $99.9(0.3)$  & $10,167(89)$      & $1.0089(0.026)\times 10^6$ \\
   & time              & 3.1min       &  3.6min           & 7.8min    \\
   &$N_{MC}$           & $10^8$       & $10^7$            &$10^6$     \\
   \cmidrule{2-5}
Panjer & $Q_{0.999}$   & $99.352$       & $10,081$            & $1.0128\times 10^6$ \\
       & time          & 6.9s           &  4.4s              & 15h  \\
       & $h$           &$2^{-7}$        & $1$           &  $1$  \\
\cmidrule{2-5}
FFT &$Q_{0.999}$       & 99.352         &   $10,081$           & $1.0128\times 10^6$ \\
    & time             & 0.13s          &   0.13s            & 28s \\
    & $h$              &$2^{-7}$        & $1$           & $1$      \\
    & $M$        &$2^{14}$        & $2^{14}$           & $2^{21}$ \\
\bottomrule
\end{tabular*}
\label{Paper_CalcCompDistr_MethodComparison_PoisGPD_table}
\end{table}

\newpage

\begin{figure}[ht]

\includegraphics[scale=1.0]{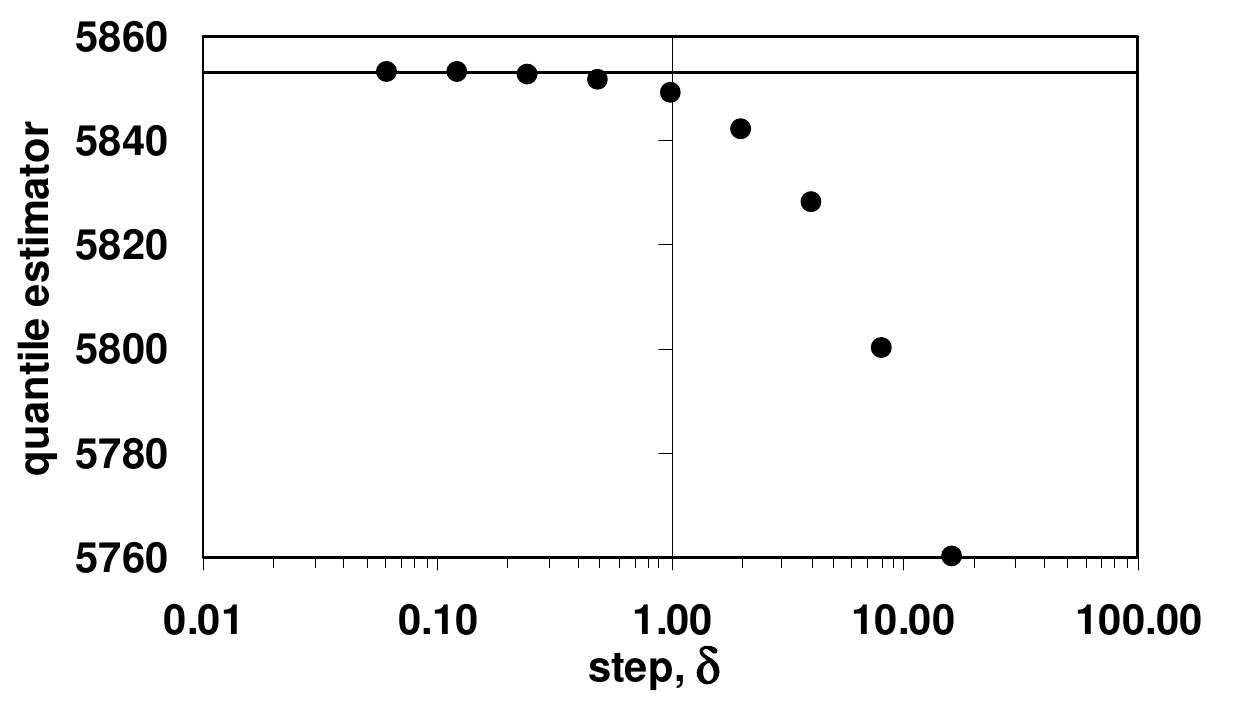}
\includegraphics[scale=1.0]{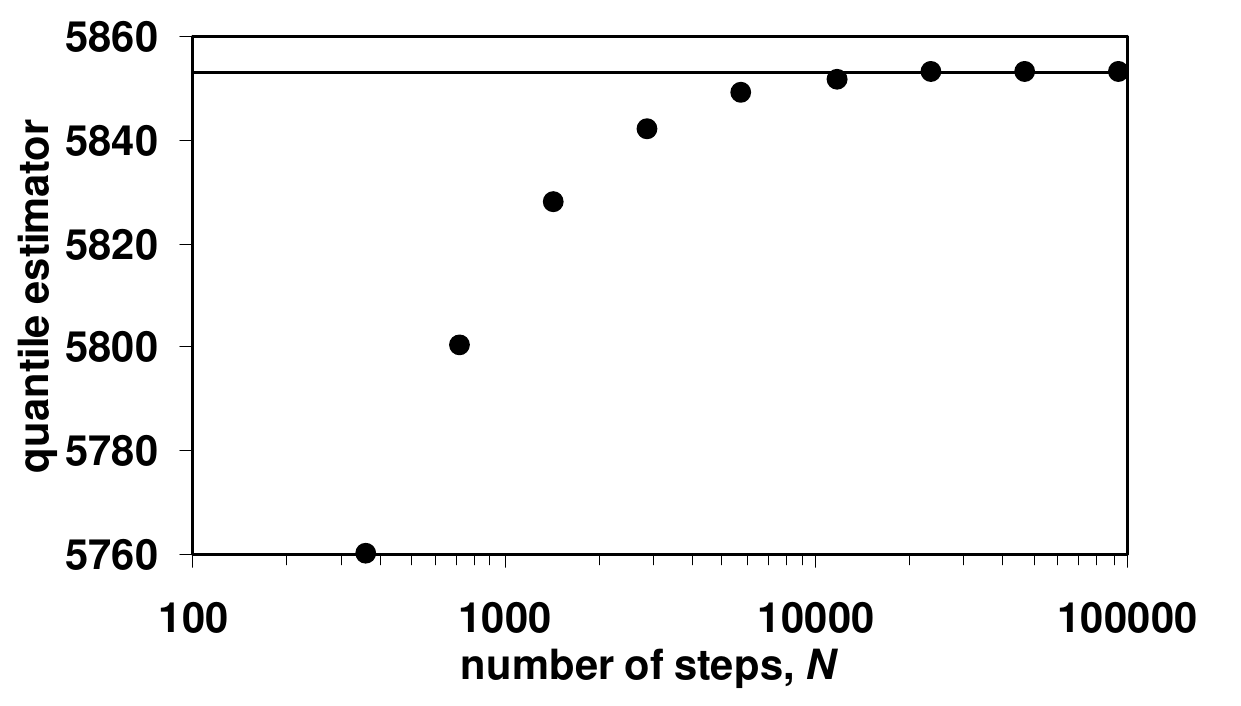}
\caption{ Panjer recursion estimate, $\widehat{q}_{0.999}$, of the
0.999 quantile for the $Poisson(100)-\mathcal{LN}(0,2)$ compound
distribution vs the step size $\delta$ (top figure) and vs the
number of steps $N=\widehat{q}_{0.999}/\delta$ (bottom figure).}
\label{Paper_CalcCompDistr_PanjerRecursionConvergence_fig}
\end{figure}

\vspace{3cm}

\begin{figure}[hb]

\includegraphics[scale=1.0]{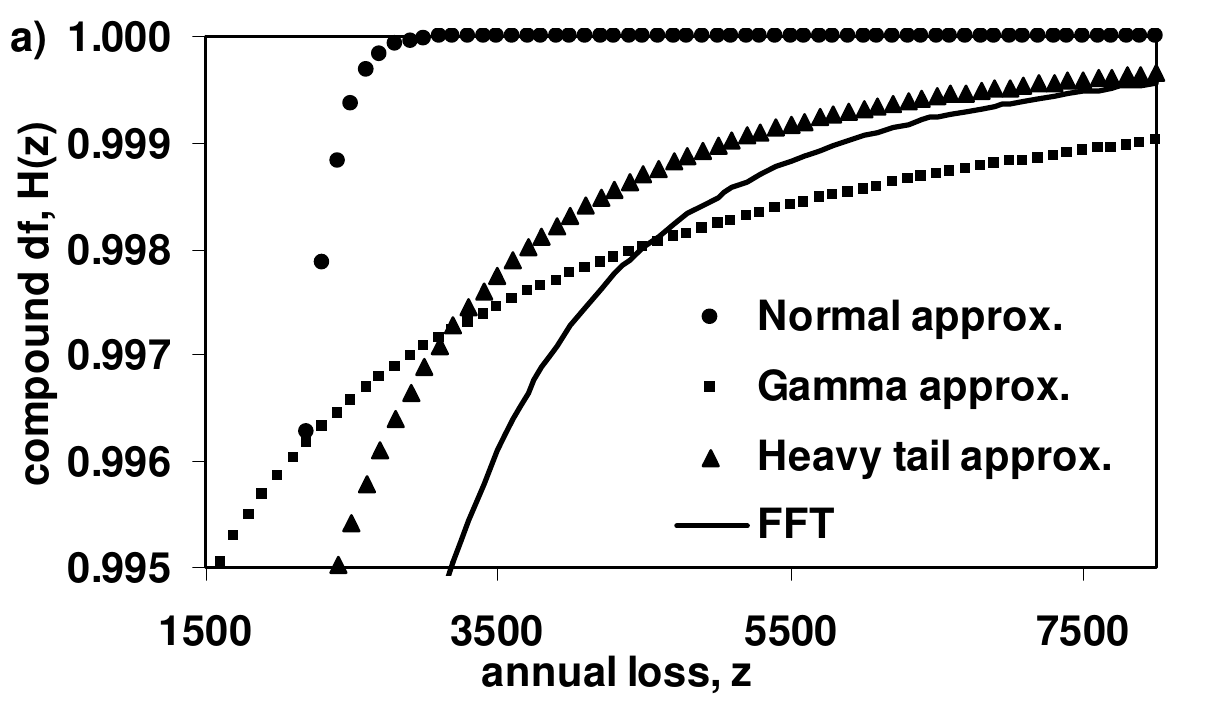}
\includegraphics[scale=1.0]{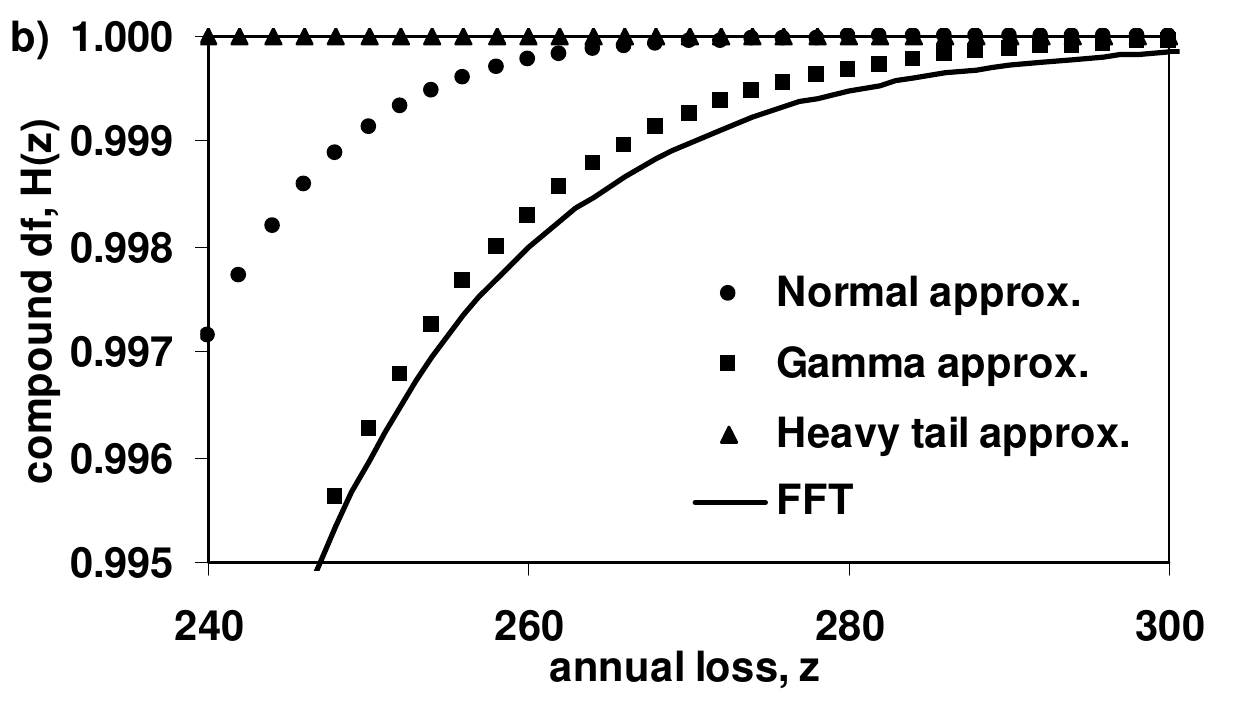}
\caption{Different approximations for the tail of the
$Poisson(100)-\mathcal{LN}(0,\sigma)$ distribution for a)
$\sigma=2$; and b) less heavier tail $\sigma=1$.}
\label{Paper_CalcCompDistr_CompDistrApprox_fig}
\end{figure}

\end{document}